\theoremstyle{definition}
\newtheorem{theorem}{Theorem}[section]
\newtheorem{definition}[theorem]{Definition}
\newtheorem{example}[theorem]{Example}
\newtheorem{lemma}[theorem]{Lemma}
\newtheorem{proposition}[theorem]{Proposition}
\newtheorem{remark}[theorem]{Remark}
\numberwithin{equation}{section}
\title{Criticality and Popularity in Social Networks}
\author{Eberhard Mayerhofer\thanks{University of Limerick, Department of Mathematics and Statistics, Castletroy, Co. Limerick, Ireland, email {eberhard.mayerhofer@ul.ie}.}}
\begin{document}
\maketitle
\thispagestyle{empty}

\vspace{-1cm}
\begin{abstract}
I find that several models for information sharing in social networks can be interpreted as age-dependent multi-type branching processes, and build them independently following Sewastjanow. This allows to characterize criticality in (real and random) social networks. For random networks, I develop a moment-closure method that handles the high-dimensionality of these models: By modifying the timing of sharing with followers, all users can be represented by a single representative, while leaving the total progeny unchanged. Thus I compute the exact popularity distribution, revealing a viral character of critical models expressed by fat tails of order minus three half.
\end{abstract}

\textbf{MSC (2010): 60J85}
\smallskip

\textbf{Keywords:} age-dependent multi-type branching processes, social media platforms, information spreading

\maketitle
\newpage
\tableofcontents

\section*{Introduction}
Modern communication is facilitated by social media on the world wide web, where, thanks to computer technology, information may be shared instantly between a sizeable portion of the earth's population.\footnote{\url{ourworldindata.org/rise-of-social-media}} At times, users' attention seemingly explodes, by sharing quickly among connected members, thus reaching numerous readers, a situation that is typically refereed as ``viral". Empirical research on the dynamics of information spread on the world wide web outweighs fundamental research using mathematical tools. This paper develops probabilistic models of information flow through a social media network. We take inspiration from several models \cite{gleeson2014competition, gleeson2016effects, d2019spreading}, that find that a branching mechanism may explain several empirically observed traits of information spreading\footnote{These are analytically tractable models, using ideas of the empirical paper \cite{weng2012competition}.)} Users of one or several interconnected social media apps share information with their followers, either by creating new threads or sharing existing ones. A large amount of information received by users competes for their attention.  The branching assumption implies that users treat each incoming piece of information (henceforth called meme\footnote{According to Miriam-Webster, a meme is an idea, behaviour, style, or usage that spreads from person to person within a culture. We use this notion in the same way as \cite{gleeson2014competition} to identify information of the same or similar content to be able to book-keep the dynamics of a piece of information.}) equally, and their future evolution is independent of each other, so that the popularity of a meme is summarized by the total number of sharing of (or responses to) a single meme. The `viral' character - in probabilistic terms rare events of large popularity - observed in reality (e.g., hashtags in Twitter \cite{banos} or videos on YouTube \cite{szabo2010predicting}\footnote{For references to further empirical literature, see the citations in \cite{gleeson2016effects} and \cite{d2019spreading}.}) may be replicated in critical or near-critical models by the fat tailed popularity distribution. It goes without saying that a branching mechanism can only serve as an approximation of information spreading on a real social network. For example, all models we discuss below have the common feature that the total information arriving at a node is partly exogenous, as we only keep track of the dynamics of a specific meme shared with followers (the endogenous part), which users select from their individual stream of competing memes (the endogenous part).\footnote{The presentation of a branching processes as a tree may 
lead to the misleading view that the existence of reciprocal links in a real social network is in conflict with a branching model. However, one can model with the same means the communication between two individuals: Obviously, the tree structure representing all communication on a single meme between the two does not imply a unilateral communication.}

The rest of the paper is structured as follows:
\begin{enumerate}
\item  \label{work1} Section \ref{sec: branching} develops the foundations of multi-type age-dependent processes, along the lines of \cite{Sewastjanov}, thereby re-defining the concept of final classes, so to be able to establish the precise conditions under which extinction occurs. (See Remarks \ref{rem: final classes} and \ref{remk: age}, and Theorem \ref{th: irr char disc timec} with proof.) This is contrary to \cite{gleeson2014competition, gleeson2016effects, d2019spreading} who exclusively use univariate PGFs and some approximations to handle multi-type processes. 

\item \label{work2} Next, we put forward a new setup of these models which is consistent with the branching processes literature in Section \ref{sec: Models}, using merely the prosaic explanation found in the original papers. This undertaking
is  motivated by \cite{d2019spreading} who note that ``this model can be thought of as an age-dependent multi-type branching process" in the sense of the monograph \cite{athreya1972peter}. However,  we found the more general class of age-dependent branching processes of \cite{Sewastjanov} more suitable to replicate the competition-induced criticality in the sense of \cite{gleeson2014competition}. Also, our approach allows to perfectly match the first moments of \cite{gleeson2014competition, gleeson2016effects, d2019spreading}, and thus to appeal to standard results  what concerns extinction (see also Section \ref{sec: branching}) or popularity (see the end of Section \ref{sec: epsilon 0 representation}).

We consider two main model classes:  A model of information spreading on an actual (possibly multi-layer) network, where each layer represents a social media app (Model 1 and 1b), and the other one (Model 2) builds on a random network, so to reduce dimensionality of the former. Unlike the intuitive derivation of the key probability generating functions in \cite{gleeson2014competition, gleeson2016effects, d2019spreading}, we obtain the delay integral equations governing multi-type branching models, using only the classical basic building blocks for the branching mechanisms (timing of particles' death, and the distributions of descendants upon death of a particle), and then appeal to the results developed in Section \ref{sec: branching} to characterize critical behaviour. We then prove the conjecture of \cite{d2019spreading} concerning (sub)criticality of the Model 2. To this end we analyze in Section \ref{sec: Matrices} the spectrum of so-called Skew-sub-stochastic matrices, which, in the irreducible case, surprisingly constitute the class of non-negative matrices with spectral radius $\leq 1$ (See, e.g. Theorem \ref{thm: skew stoch}). \footnote{\cite{d2019spreading} proves criticality under three additional assumptions: smallness of certain parameters (innovation),  irreducibility of the first moment matrix, and dominance of one layer.} We are thus able to answer the conjecture of \cite[Section 3.2]{d2019spreading} in the positive, that ``the system is subcritical for all valid parameter values".\footnote{ This conjecture was hardened by\cite{d2019spreading} through simulations.} (See Theorem \ref{th: crit irr etc}, Theorem \ref{prop: subskew social} and Section \ref{sec: multi} for the multi-layer case.) Remark \ref{rem: parameterization} reflects on the maximal parameterization of Model 1.
\item \label{work3} Moment-closure, as introduced in \cite{gleeson2014competition}, but also implicitly used in \cite{gleeson2016effects}, can be interpreted as means to represent a multi-type branching process by a single-type processes,
so to make the usually high-dimensional problems (e.g., characterising criticality and estimating the popularity distribution) analytically tractable. In Section \ref{sec: epsilon 0 representation} we study this technique for random networks.  To answer the remaining problem of \cite{gleeson2014competition} concerning the quality of approximation, we develop an exact closure method that allows to compute the popularity distribution, revealing
fat tails of order -$3/2$ in the critical case and thus agrees, modulo a normalizing constant, with the aforementioned empirical findings, as well as the approximation of \cite{gleeson2014competition}.
\end{enumerate}

\section{Multi-type Age-dependent Branching processes}\label{sec: branching}

In this section we summarize a few fundamental statements about age-dependent branching processes with multiple types $T^1,\dots,T^n$. Each particle of type $T^i$ lives a random time $\tau^i$ with distribution function $G_i(t)=\mathbb P[\tau^i\leq t]$.

Conditioned on the event $\{\tau^i=u\}$, the probability generating function (henceforth PGF) of the particle distribution $\nu^i$ (the totality of particles of each type, emerging when one $T^i$ particle dies) is given by
\[
h^i(s;u):=\mathbb E[s^{\nu^i}\mid \tau^i=u]=\sum_{\alpha\in \mathbb N_0^n }p_\alpha^i(u) s^\alpha,
\]
where $s=(s^1,\dots,s^n)$ is the argument of the PGF, and we use the notation $s^\alpha=(s^1)^{\alpha_1}\dots (s^n)^{\alpha_n}$

In the typical definition of age-dependent branching processes  the conditional probabilities $p^i_\alpha(u)$ do not depend on the age $u$ of the article (cf.~\cite[Chapter 28.3, p.158]{harris1963theory} \cite[p.225]{athreya1972peter},  or \cite{goldstein1971critical}). As we need age-dependence in this sense, we use the setup of the monograph \cite{Sewastjanov}. As the book is only available in the Russian original, or its German translation by Uwe Prehn, we give a summary of the essential references.

For $1\leq i\leq n$, the vector $\mu^i(t)=(\mu^i_1(t),\dots\mu^i_n(t))$ describes the number of particles $\mu^i_j(t)$, assuming the process has started with a single individual of type $T^i$.
We define the PGFs
\begin{equation}\label{eq; Fi}
F^i(t,s)=\mathbb E[s^{\mu^i(t)}], \quad 1\leq i\leq n.
\end{equation}

The branching mechanism implies that the distribution $\mu(t)\mid \mu(0)=\beta$, where $\beta=(\beta_1,\dots,\beta_n)$ is fully specified by the $\mu^i$'s, (and thus the $h^i$'s), as the evolution
of any two particles of any type is mutually independent. In other words,
\[
\mathbb E[s^{\mu(t)}\mid \mu(0)=(\beta_1,\dots,\beta_n)]=(F^1(s))^{\beta_1}\dots (F^n(s))^{\beta_n}.
\]
We use the abbreviation $F(t,s):=(F^1(t,s),\dots, F^n(t,s))$.

Conditioning on the time of death of each initial particle, and using the law of total expectation as well as the branching property, we obtain (\cite[Proof of Satz VIII.1.1]{Sewastjanov}):
\begin{theorem}\label{th: unique existence}
The function $F(t,s)$ satisfies the system of equations 
\begin{equation}\label{eq: superdyn}
F^i(t,s)=\int_0^t h^i(u, F(t-u,s)dG^i(u)+s^i (1-G^i(t)),\quad t\geq 0,\quad 1\leq i\leq n.
\end{equation}
\end{theorem}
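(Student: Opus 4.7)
The plan is to derive the integral equation by conditioning on $\tau^i$, the lifetime of the initial particle of type $T^i$, and invoking the branching property stated just before the theorem. I would begin from the decomposition
\[
F^i(t,s)=\mathbb E\bigl[s^{\mu^i(t)}\mathbf 1_{\{\tau^i>t\}}\bigr]+\mathbb E\bigl[s^{\mu^i(t)}\mathbf 1_{\{\tau^i\leq t\}}\bigr].
\]
On the event $\{\tau^i>t\}$ the initial particle has not yet died, so it is the unique survivor, $\mu^i(t)$ equals the $i$-th unit vector, and $s^{\mu^i(t)}=s^i$; since $\mathbb P[\tau^i>t]=1-G^i(t)$ this accounts for the boundary term.

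The substantive step is handling $\{\tau^i=u\}$ for $u\leq t$. On this event the initial particle dies at age $u$ and emits an offspring vector $\nu^i$ whose conditional PGF is $h^i(\cdot;u)$. By the branching postulate, given $\nu^i=\alpha=(\alpha_1,\dots,\alpha_n)$ the $|\alpha|$ newborn particles initiate mutually independent subprocesses, and each type-$j$ descendant subpopulation has the same law as an independent copy of $\mu^j$ run for the remaining time $t-u$. Invoking the branching identity displayed just above the theorem,
\[
\mathbb E\bigl[s^{\mu(t-u)}\mid \mu(0)=\alpha\bigr]=\prod_{j=1}^n\bigl(F^j(t-u,s)\bigr)^{\alpha_j}=F(t-u,s)^\alpha,
\]
and averaging over $\alpha$ against the conditional distribution $\{p^i_\alpha(u)\}$ yields
\[
\mathbb E\bigl[s^{\mu^i(t)}\mid \tau^i=u\bigr]=\sum_{\alpha\in\mathbb N_0^n}p^i_\alpha(u)\,F(t-u,s)^\alpha=h^i\bigl(F(t-u,s);u\bigr).
\]
Integrating against $dG^i(u)$ on $(0,t]$ via the tower property and adding the survival contribution $s^i(1-G^i(t))$ produces the claimed delay integral equation.

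The only delicate point is the appeal to the regenerative/branching property that lets me substitute independent copies of the $\mu^j$'s for the descendant subprocesses; but this is baked into the model specification, so at the formal level the argument reduces to a clean application of the tower property and the definitions of $h^i$, $F^i$ and $G^i$. Since the statement asserts only that $F$ \emph{satisfies} the equation, I do not need a Picard or contraction argument on the associated integral operator, and measurability/integrability are immediate because the integrand is bounded by $1$ on the PGF domain $|s^k|\leq 1$.
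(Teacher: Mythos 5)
Your proof is correct and follows essentially the same route the paper indicates: the paper does not write out the argument but attributes it to conditioning on the death time of the initial particle, the law of total expectation, and the branching property (citing Sewastjanow's Satz VIII.1.1), which is precisely the decomposition into $\{\tau^i>t\}$ and $\{\tau^i\le u\le t\}$ that you carry out. The one point worth making explicit is that the offspring born at time $u$ are aged zero at that instant and hence their subsequent evolution over $[u,t]$ is governed by independent copies of $\mu^j(t-u)$ --- exactly the regenerative step you flag as ``baked into the model,'' which is indeed how the age-dependent framework is set up here.
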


\subsection{Final Classes}

In this section we introduce the notion classes, and re-define the notion of final classes for age-dependent processes. By averaging over age, we obtain the unconditional particle distribution upon death of a single particle of type $T^i$, defined by $p^i_\alpha:=\mathbb P[\nu^i=\alpha]$, $1\leq i\leq n$ in terms of the following PGFs
\begin{equation}\label{eq: age out}
h^i(s):=\int_0^\infty h^i(u,s)dG^i(u),\quad s\in [0,1]^n.
\end{equation}
We start with the following:
\begin{definition}\label{def: final class}
We say that Type $T_k$ follows type $T_i$, or that $T_i$ precedes $T_k$ and write $T_i\rightarrow T_k$, if there exists $t>0$ such that
\[
\mathbb P[\mu^i_k(t)>0]>0.
\]
\end{definition}

A class $S_{i0}\subseteq \{ T^1,\dots, T^m\}$, where $1\leq i\leq n$, is the collection of particles that both precede and follow particle type $T_i$.

\begin{example}\label{ex sew}
We now expand a little bit on the example found in \cite[Chapter IV.6]{Sewastjanov}: Let $n=7$, and the branching process be defined via the generating functions
\begin{align*}
h^1(s)&=\frac{s^1(s^2+s^3s^5)}{2},\quad h^2(s)=\frac{(s^2)^2+1}{2},\\
h^3(s)&=s^4,\quad h^4(s)=s^3,\quad h^5(s)=s^6,\\
h^6(s)&=\frac{3s^6+s^7}{4},\quad h^7(s)=\frac{s^6+2s^7}{3}.
\end{align*}
From $h^1$ we see that particles of type $T^1,T^2,T^3$ and $T^5$ follow $T^1$ in a single step (and therefore all particles, check the other generating functions). But from the other generating functions we see that no particles precede $T^1$, except $T^1$. Therefore 
\[
K_1:=S_{10}=\{T_1\}.
\]
From $h^2$ we see only type $T^2$ follows $T^2$, and therefore only $T^2$ precedes $T^2$, and we get
\[
K_2:=S_{20}=\{T_2\}.
\]
From $h^3$ we see that Type $T^3$ is followed by $T^4$, and from $h^4$ we see that $T^4$ is followed by $T^3$, and therefore
\[
K_3:=S_{30}=S_{40}=\{T_3,T_4\}.
\]
Similarly, we see that 
\[
K_4=S_{60}=S_{70}=\{T_6,T_7\}.
\]
The remaining class is  singular, as $S_{50}=\{\}$: From $h^5$ we see that Type $T^6$ follows $T^5$, and by $h^6$ we see that $T^7$ follows $T^6$, but the former Type can only  produce particles of type $T^6$ (see $h^7$.). On the other hand, only type $T^1$ precedes $T^5$, see $h^1$.
In this case we define
\[
K_5=\{T_5\}.
\]
(that the subscript $5$ being the same is a pure coincidence.

We see that the classes are pairwise disjoint and their union yields the total set of particles.
\end{example}

\begin{definition}\label{def: final}
A class $K=\{T^{i_1},\dots, T^{i_m}\}$ is called final class, if it is is not singular, and if there exists $t>0$ such that for any $T^{i_j}\in K$ the generating function $h^{i_j}$ is a linear form in the variables $s^{i_1},\dots, s^{i_m}$, that is
\[
h^{i_j}(s)=\sum_{k=1}^m \varphi_{jk}(s) s^{i_k},\quad 1\leq j\leq m,
\]
where the functions $\varphi_{jk}$ do not depend on $s^{i_1},\dots, s^{i_m}$.
\end{definition}

\begin{remark}\label{rem: final classes}
\cite[Definition IV.6.8]{Sewastjanov} defines final classes only for discrete-time and continuous branching processes, and he does so on the stochastic process level, that is, in terms of $F=(F^1,\dots ,F^n$ defined in \eqref{eq; Fi}\footnote{ Meaning, a class $K=\{T^{i_1},\dots, T^{i_m}\}$ is called final class, if it is is not singular, and if there exists $t>0$ such that for any $T^{i_j}\in K$ the generating function $F^{i_j}$ is a linear form in the variables $s^{i_1},\dots, s^{i_m}$, that is
\[
F^{i_j}(t;s)=\sum_{k=1}^m \varphi_{jk}(t;s) s^{i_k},
\]
where the functions $\varphi_{jk}$ do not depend on $s^{i_1},\dots, s^{i_m}$.}
and then shows that if the property holds for some $t>0$, so it does for all times $t>0$ (\cite[Satz IV.6.1]{Sewastjanov}). In particular, in the discrete-time case this implies the property for the
$h^i$'s, and thus Sewastjanow's definition is equivalent to ours in the discrete-time case. However, in the continuous-time, Markovian case the proof requires continuity of $F(t,s)$ in time $t>0$, as well as the functional equation $F(t+\tau;s)=F(t,F(\tau,s))$ for $t,\tau>0$. None of these properties
we have available for general age-dependent processes: The functional equation is replaced by the more general system of delay differential equations \eqref{eq: superdyn},
and regularly results are typically available for single type processes only). It therefore comes as surprise that \cite{Sewastjanov} doesn't properly define the notion of final classes for age-dependent processes, but use it for \cite[Satz VIII.3.2]{Sewastjanov} and the subsequent paragraph). We have therefore modified the setup slightly, and leave as conjecture that Definition \eqref{def: final} is equivalent to \cite[Definition IV.6.8]{Sewastjanov} in the general setting of age-dependent processes.
\end{remark}

\begin{example}
To continue Example \ref{ex sew}, all classes are final, except $K_5$ (which is singular) and $K_2$ (since the generating function $h^2$ is quadratic in $s^2$, not linear.).
\end{example}

The main theme in this paper is extinction, which for multi-type processes is defined as follows:
\begin{definition}\label{defex}
The extinction probabilities $q=(q^1,\dots,q^n)$ are defined as
\[
q^i:=\mathbb P[\exists t>0:\; \mu^i(t)=0],\quad 1\leq i\leq n.
\]
We say, the probability of extinction is one, if $q=1$, that is, $q^i=1$ for $1\leq i\leq n$.
\end{definition}
Due to the branching property, the probability of extinction is one if and only if for any initial population
we have $\mathbb P[\exists t>0:\quad \mu(t)=0]=1$. This validates the notion of extinction in Definition \ref{defex}.

\subsection{The Discrete Case}
The discrete-time branching process is obtained, when assuming that 
\begin{itemize}
\item each particle $T^i$ dies at $t=1$ with probability one, that is $G^i(t)=1_{t\geq 1}$, or in other words, $dG^i(u)=\delta_{t=1}(du)$. 
\item $h^i(u,s)=h^i(s)$, $u\geq 0$.
\end{itemize}

Then \eqref{eq: superdyn} yields constant solutions on $[n,n+1)$ defined inductively by $F^i(n;s)$, where
\[
F^i(0,s)=s^i,\quad\text{and for}\quad n\geq 1,\quad  F^i(n,s)=F(F^i(n-1,s)).
\]

Due to Remark \ref{rem: final classes}, the notion of final classes is the same to Sewastjanow's, at least in the discrete-time case. Therefore
we have the following characterisation (cf.~\cite[Satz V.1.5]{Sewastjanov}):
\begin{theorem}\label{th: main discr}
For a branching process $\mu(t)$ in discrete time with first moment matrix $A$ the following are equivalent:
\begin{enumerate}
\item  The probability of extinction is one.
\item There are no final classes, and $\rho(A)\leq 1$.
\end{enumerate}
\end{theorem}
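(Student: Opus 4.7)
My plan is to reduce the statement to \cite[Satz V.1.5]{Sewastjanov}. By Remark \ref{rem: final classes}, it suffices to verify that in discrete time Definition \ref{def: final} agrees with Sewastjanow's process-level definition (linearity of $F^{i_j}(n,\cdot)$ in the class variables for some $n$). One direction is immediate from $F^i(1,s)=h^i(s)$. For the converse, a straightforward induction on $n$ using $F^i(n,s)=h^i(F(n-1,s))$ shows that linearity of each $h^{i_j}$ in the class variables $s^{i_1},\ldots,s^{i_m}$ is preserved by iteration: the outer linear form acts on already-linear inner expressions, so each $F^{i_j}(n,s)$ remains linear in $s^{i_1},\ldots,s^{i_m}$ (with class-independent coefficients). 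Granting this equivalence, the theorem is Sewastjanow's result applied to the process with transition $s\mapsto h(s)$.

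If one prefers a self-contained argument, the classical route uses the fact that $q$ is the smallest fixed point of $h$ in $[0,1]^n$. For the direction (1) $\Rightarrow$ (2), I would argue contrapositively: a final class $K$ forces, via $h^{i_j}(1)=1$ applied to the linear-form expression, that every particle of a type in $K$ produces exactly one offspring of a type in $K$. Hence the total class-$K$ population is a.s.\ non-decreasing and positive, so $q^{i_j}<1$ for $T^{i_j}\in K$. Separately, $\rho(A)>1$ on some irreducible block yields, via Perron--Frobenius and a convexity argument on $h$ restricted to that block, a fixed point of $h$ strictly below $1$, hence again $q\neq 1$.

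The direction (2) $\Rightarrow$ (1) is the main obstacle. The strategy is a reverse induction along the partial order on classes $\{S_{i0}\}$ induced by the follow/precede relation, starting at terminal classes (those with no successors outside themselves). For each such class, non-finality means at least one $h^{i_j}$ is strictly non-linear in the class variables, while the restricted first-moment block has spectral radius at most $\rho(A)\leq 1$; a classical convex-analytic argument (strict convexity excluding interior fixed points, combined with $\rho\leq 1$ excluding boundary ones distinct from $1$) then forces $q=1$ on the class. Having established $q^k=1$ for all types in successor classes, the analysis of a predecessor class reduces to the same argument with the effective PGFs $\widetilde h^{i_j}(s):=h^{i_j}(s)|_{s^k=1\text{ for successor }k}$, which inherit non-linearity and the spectral bound. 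The delicate point I expect to require most care is the critical boundary case $\rho=1$ on a class, where strict convexity from non-linearity must be leveraged against the Perron eigenvector to exclude any fixed point in $[0,1]^m\setminus\{1\}$.
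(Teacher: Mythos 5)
Your first paragraph is exactly the paper's route: the paper gives no proof of this theorem beyond citing \cite[Satz V.1.5]{Sewastjanov}, justified by Remark \ref{rem: final classes}, which establishes precisely the definitional equivalence you verify (Sewastjanow's process-level finality at some $t$ implies finality at $t=1$, i.e.\ of $h$, while the converse is your induction on $F^i(n,s)=h^i(F(n-1,s))$). The additional self-contained sketch goes beyond what the paper attempts, but the core reduction is correct and identical in substance.
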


\subsection{The Continuous Case}
We assume that $G^i(t)$ have continuous densities supported on $[0,\infty)$ (whence $G^i(0)=0$), and that
the first unconditional moments $A^i_j$ of the particle distribution $\nu^i$ given by \eqref{eq: moments 1} are finite
for all $1\leq i,j\leq n$. Thus the assumptions of \cite[Theorem VIII.2.1]{Sewastjanov} are satisfied and imply that the PGFs $F=(F^1,\dots, F^n)$ are the unique solution of
the system of delay differential equations \eqref{eq: superdyn}.

\begin{theorem}\label{th: irr char disc timec}
Let $\mu(t)$ be an age-depending branching process, with first moment matrix $A$. The following are equivalent:
\begin{enumerate}
\item  The probability of extinction is one.
\item The process has no final classes, and $\rho(A)\leq 1$.
\end{enumerate}
\end{theorem}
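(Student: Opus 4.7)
The plan is to reduce the continuous-time problem to Theorem \ref{th: main discr} by passing to the embedded generational (Galton-Watson) process $\tilde\mu$ associated with $\mu(t)$. This auxiliary discrete-time chain retains the full family-tree genealogy of $\mu(t)$ while discarding all timing information: set $\tilde\mu(0):=\mu(0)$, and let $\tilde\mu(n+1)$ record for each type the total number of direct descendants of the particles counted in $\tilde\mu(n)$. Conditional on a type-$T^i$ parent dying at age $u$ its offspring PGF is $h^i(u,s)$, so integrating against $dG^i(u)$ shows that the unconditional offspring PGF of $\tilde\mu$ is precisely the age-averaged $h^i(s)$ of \eqref{eq: age out}. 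In particular the first moment matrix of $\tilde\mu$ has entries $\partial_{s^j}h^i(\mathbf{1})=A^i_j$ and coincides with $A$. Moreover, because Definition \ref{def: final} is phrased purely in terms of the $h^i$'s, the final-class decomposition of $\tilde\mu$ agrees with that of $\mu(t)$ by construction.

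The second step is to identify the two extinction events. Since each $G^i$ admits a continuous density supported on $[0,\infty)$, every particle lives a strictly positive random lifetime that is a.s.\ finite. Along any surviving lineage of the genealogical tree, the death of a parent coincides with the birth of its children, so no temporal ``gap'' can occur in the population; hence survival of $\tilde\mu$ in the sense $\tilde\mu(n)\neq 0$ for all $n$ forces $\mu(t)\neq 0$ for all $t\geq 0$. Conversely, if $\tilde\mu$ goes extinct then the total progeny is finite almost surely and, all lifetimes being finite, $\mu(t)=0$ for all sufficiently large $t$. The branching property (cf.\ the remark following Definition \ref{defex}) then shows that the coordinate-wise extinction probabilities of $\mu(t)$ and of $\tilde\mu$ coincide.

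The argument closes by applying Theorem \ref{th: main discr} to $\tilde\mu$: its probability of extinction is one iff $\tilde\mu$ has no final classes and $\rho(A)\leq 1$; by the preceding two paragraphs this equivalence transfers verbatim to $\mu(t)$. I expect the main obstacle to be the extinction equivalence itself, specifically the forward direction that survival of the embedded tree forces $\mu(t)>0$ for all $t$. The ``no gap'' argument depends on the continuity of the $G^i$ and on the fact that births along a lineage succeed deaths instantaneously; without these inputs one could in principle imagine a surviving genealogy whose time projection leaves $\mu(t)=0$ at isolated moments. The transfer of $A$ and of the final-class structure, by contrast, is a straightforward bookkeeping consequence of having formulated Definition \ref{def: final} at the level of the averaged PGFs $h^i(s)$ rather than of the $F^i(t,s)$, which as Remark \ref{rem: final classes} explains is the key conceptual adjustment made in this paper.
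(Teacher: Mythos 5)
Your reduction to the embedded generational process with averaged offspring PGFs $h^i(s)$ from \eqref{eq: age out} is exactly the reduction the paper makes: the auxiliary discrete-time process has first moment matrix $A$ and the same final-class structure (by design of Definition \ref{def: final}), and Theorem \ref{th: main discr} is then applied to it. Where you diverge is the bridge between the two extinction problems. The paper does not compare the extinction \emph{events} pathwise; it compares the extinction \emph{probabilities} analytically, quoting Sewastjanow's Satz VIII.3.1 (the extinction probability vector $q$ of the age-dependent process is the root of $h(s)=s$ closest to the origin) together with Satz V.1.4 (the same characterization for the discrete skeleton), so that $q=q_D$ follows from both being the minimal root of the same fixed-point equation.

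Your pathwise substitute for this step has a gap in the forward direction. The genuine obstruction is not, as you suggest, that $\mu(t)$ might vanish at isolated moments; it is explosion. If the genealogical tree is infinite but the sums of lifetimes along lineages accumulate, the entire population can be dead by some finite time $T^*$ even though $\tilde\mu(n)\neq 0$ for every $n$; then the event $\{\exists t:\mu(t)=0\}$ occurs while the skeleton survives, and the two extinction probabilities differ. Ruling this out requires a regularity (non-explosion) argument resting on the standing assumptions that the $G^i$ have continuous densities with $G^i(0)=0$ and that the first moments are finite --- precisely the hypotheses under which the paper can invoke \cite[Theorem VIII.2.1]{Sewastjanov} and Satz VIII.3.1. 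Your ``no temporal gap along a lineage'' observation only shows that $\mu(t)>0$ for $t$ up to the explosion time of some surviving lineage, not for all $t\geq 0$. Either supply a non-explosion lemma at this point, or replace the pathwise identification by the fixed-point characterization of $q$, as the paper does; the rest of your argument (transfer of $A$ and of the final classes, and the application of Theorem \ref{th: main discr}) matches the paper's proof.
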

\begin{proof}
For the entire proof we use the short-hand notation $h=(h^1,\dots,h^n)$, where for each $1\leq i\leq n$, $h^i$ is the PGF of the unconditional particle distribution $(\nu^1,\dots,\nu^n)$ as defined in \eqref{eq: age out}.

By \cite[Satz VIII.3.1]{Sewastjanov}, the extinction probabilities $q=(q^1,\dots,q^n)$ are those solutions of the system $h(s)=s$ which are closest to the origin.

On the other hand, we know that the functions $h(s)$ determines the branching mechanism of a discrete-time branching process $\mu_D(t)$, and that the extinction probabilities $q_D=(q_D^1,\dots, q_D^n)$ of this process are also the smallest non-zero solutions of the same equation $h(s)=s$ on the  unit hypercube $0\leq s^i\leq 1$, $1\leq i\leq n$ (\cite[Satz V.1.4]{Sewastjanov}). Furthermore, final classes are defined, both for the original process $\mu(t)$ and the auxiliary discrete-time process $\mu_D(t)$, by the same function $h(s)$. Therefore, by Theorem \ref{th: main discr}, the extinction probabilities of $\mu_D(t)$ are equals $1$, if and only if $\mu_D(t)$ has no final classes, and $\rho(A)\leq 1$. Since they are the first non-zero solution of the equation $h(s)=s$, any of these statements is equivalent to $\mu(t)$ having unit extinction probability.
\end{proof}

\begin{remark}\label{remk: age}
Final classes were not defined in \cite{Sewastjanov} for age-dependent processes, while the book claims the exact same result as Theorem \ref{th: irr char disc timec} (namely
\cite[Satz VIII.3.2]{Sewastjanov}). The reason for providing a proof in these notes is the new definition of final classes on the level of $h$ in \eqref{eq: age out}, and the fact that \cite{Sewastjanov}
only has an incomplete proof sketch. See also Remark \ref{rem: final classes}.
\end{remark}

\subsection{Irreducibility and Criticality}
\begin{definition}\label{def: irr}
A branching process is called irreducible, if all particle types $\{T_1,\dots, T_m\}$ form a class of connected particle types. All other processes are called reducible.
\end{definition}
An irreducible process has only one class. Therefore, final classes can be characterized easily (we skip the simple proof):
\begin{lemma}\label{char: irr final}
For an irreducible process $\mu(t)$, denote by $K=\{T_1,\dots, T_n\}$ its only class. The following are equivalent:
\begin{enumerate}
\item \label{ita1} $K$ is final.
\item  \label{ita2} There exist constants $\varphi_k\in\mathbb R$ for $k=1,\dots,n$ such that the probability generating functions $h^i(s)$ are of the form
\begin{equation}\label{eq: final irr}
h^i(s)=\sum_{k=1}^n \varphi_k s^k, \quad 1 \leq n.
\end{equation}
\end{enumerate}
\end{lemma}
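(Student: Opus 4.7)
The plan is to read off the claim directly from Definition \ref{def: final} applied to the irreducible case, since by irreducibility the only class $K$ coincides with the full set $\{T^1,\dots,T^n\}$ of types, so the variables on which the coefficients $\varphi_{jk}$ are allowed to depend (namely those \emph{outside} the class) form an empty set.

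First I would check that $K$ is not singular, so that the final-class definition even applies. By irreducibility, for each $T_i$ there exists $T_j$ with $T_i\rightarrow T_j$ and $T_j\rightarrow T_i$; composing these paths gives $T_i\rightarrow T_i$, so $T_i\in S_{i0}$ and $S_{i0}=K$ is non-empty, i.e.\ non-singular.

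Next I would prove \ref{ita1}$\Rightarrow$\ref{ita2}. Assume $K$ is final. By Definition \ref{def: final} there exist functions $\varphi_{ik}$, independent of $s^{i_1},\dots,s^{i_m}$, such that
\[
h^i(s)=\sum_{k=1}^n \varphi_{ik}(s)\,s^k,\qquad 1\le i\le n.
\]
Since $\{i_1,\dots,i_m\}=\{1,\dots,n\}$ in the irreducible case, the variables $s^{i_1},\dots,s^{i_m}$ are in fact all the arguments of $h^i$. Hence each $\varphi_{ik}$ cannot depend on any variable and is a real constant, which yields \eqref{eq: final irr} (with the convention that the constants may carry an $i$-index; the notation $\varphi_k$ in the lemma is shorthand for this).

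Conversely, for \ref{ita2}$\Rightarrow$\ref{ita1}, if $h^i(s)=\sum_{k=1}^n \varphi_k\, s^k$ with real constants, then this is trivially a linear form in $s^1,\dots,s^n$ with coefficients that do not depend on those variables. Together with the non-singularity from the first step, Definition \ref{def: final} gives that $K$ is final. There is no serious obstacle here; the only care needed is to notice that in the irreducible case the ``freedom'' left to the coefficients $\varphi_{jk}$ collapses, which is exactly the content of the lemma.
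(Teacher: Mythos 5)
Your proof is correct and is exactly the direct unwinding of Definition \ref{def: final} that the paper has in mind when it says it ``skip[s] the simple proof'': in the irreducible case the class exhausts all types, so the coefficients $\varphi_{jk}$, being forbidden to depend on any of $s^1,\dots,s^n$, must be constants. Your parenthetical reading of $\varphi_k$ as carrying a suppressed $i$-index is the right (and necessary) interpretation of the lemma's notation, and your non-singularity check is the only other point that needs saying.
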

A (not necessarily irreducible) process that satisfies any of the equivalent statements of Lemma \ref{char: irr final}  has the property that with probability one, any particle (of any type) has exactly one child (of some type). Therefore the total population of such a process stays constant, and thus it never becomes extinct

For the rest of this section, need the unconditional moments of the particle distribution $\nu^i$ are finite. The latter is given, in terms of the PGFs \eqref{eq: age out}, 
\begin{equation}\label{eq: moments 1}
A^i_j:=\frac{\partial}{\partial s^j}\int_0^\infty h^i(u,s)dG^i(u)\mid_{s=1}.
\end{equation}
Irreducible matrices can be characterised as follows:
\begin{proposition}\label{th: irr mat char}
Let $A$ be a non-negative matrix. The following are equivalent:
\begin{itemize}
\item[(a)]  $A$ is irreducible.
\item[(b)]  For each $i,j\in \{1,\dots,n\}$ there exists $1\leq t\leq n$ such that $(A^t) ^i_j>0$. 
\end{itemize}
\end{proposition}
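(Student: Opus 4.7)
The plan is to translate the matrix-power condition into directed-walk reachability in the offspring graph $G(A)$ on vertices $\{T^1,\dots,T^n\}$, with an edge $i\to j$ whenever $A^i_j>0$, and then apply a standard pigeonhole argument on walk lengths. The underlying combinatorial identity is that $(A^t)^i_j>0$ is equivalent to $G(A)$ containing a directed walk of length exactly $t$ from $i$ to $j$, which is immediate once the matrix entries are read as edge indicators.

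The preliminary bridge I need is that the graph-theoretic relation ``there is a walk of some length $k\ge 1$ from $i$ to $j$ in $G(A)$'' coincides with the branching-process relation $T_i\to T_j$ of Definition~\ref{def: final class}. For the forward direction, a walk $i=i_0,i_1,\dots,i_k=j$ with $A^{i_{\ell-1}}_{i_\ell}>0$ implies, via \eqref{eq: moments 1}, that each PGF $h^{i_{\ell-1}}$ carries a positive-coefficient monomial of positive $s^{i_\ell}$-degree; following such a lineage on a positive-probability event in the genealogical tree eventually produces a type-$T^j$ descendant at some $t>0$, so $\mathbb P[\mu^i_j(t)>0]>0$. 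For the converse, let $R$ be the set of vertices reachable from $i$ by walks in $G(A)$. By construction $h^\ell$ with $\ell\in R\cup\{i\}$ depends only on the variables $s^m$ with $m\in R\cup\{i\}$, so by Picard iteration of the delay integral equation \eqref{eq: superdyn} the PGF $F^i(t,s)$ inherits the same dependence. Consequently $\mathbb P[\mu^i_j(t)>0]=0$ for all $t$ whenever $j\notin R\cup\{i\}$, contradicting $T_i\to T_j$.

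With the bridge in hand, (a)$\Rightarrow$(b) proceeds as follows: irreducibility says all types form a single class, so for any $i,j$ there is some walk of length $\ge 1$ from $i$ to $j$ in $G(A)$; pigeonhole plus loop-excision shortens it to a walk of length at most $n$ (a simple path of length $\le n-1$ when $i\ne j$, a simple cycle through $i$ of length $\le n$ when $i=j$), yielding some $1\le t\le n$ with $(A^t)^i_j>0$. Conversely, (b)$\Rightarrow$(a) is immediate from the bridge: $(A^t)^i_j>0$ for some $t\in\{1,\dots,n\}$ produces $T_i\to T_j$ for all ordered pairs, so every type belongs to the same class, i.e., $A$ is irreducible.

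The main obstacle is the ``only if'' half of the bridge: it is intuitively evident that one cannot manifest descendants of a type one cannot reach through the one-step offspring distribution, but a fully rigorous argument must propagate this through \eqref{eq: superdyn}, either by a monotone Picard iteration showing that the solution preserves the dependence of $h^i$ on variables indexed by the reachable set, or by directly exhibiting closedness of $R$ under the branching mechanism and invoking the uniqueness part of Theorem~\ref{th: unique existence}. The remaining graph-theoretic step is routine.
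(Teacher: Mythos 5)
There is a genuine mismatch between what you prove and what the proposition asserts. Here $A$ is just a non-negative matrix, and ``irreducible'' in (a) is the standard matrix-theoretic notion (no simultaneous permutation of rows and columns brings $A$ to block upper-triangular form; equivalently, the digraph $G(A)$ is strongly connected) --- this is the sense in which the paper invokes \cite[Theorem 2.1]{Plemmons} to get the equivalence of (a) with ``for each $i,j$ there is some $t\geq 1$ with $(A^t)^i_j>0$''. Your ``bridge'' instead reinterprets (a) through the branching-process relation $T_i\to T_k$ of Definition~\ref{def: final class}, and then has to fight with $\mathbb P[\mu^i_j(t)>0]$ and a Picard iteration of \eqref{eq: superdyn}. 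That material is not part of this proposition: it is precisely the content of the separate Theorem~\ref{th: irr char disc time}, whose proof in the paper \emph{uses} the present proposition. Importing it here inverts the paper's logical order, forces you to attach a branching process to an arbitrary non-negative matrix (which the statement does not supply), and --- by your own admission --- leaves the ``only if'' half of the bridge, which you correctly identify as the hard part, unproven. So as submitted the proposal is incomplete, and the incompleteness sits exactly in a step that the correct reading of the statement does not require.

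The salvageable, and in fact complete, part is your graph-theoretic paragraph. Once (a) is read as strong connectivity of $G(A)$, the proof is: $(A^t)^i_j>0$ iff $G(A)$ has a directed walk of length exactly $t$ from $i$ to $j$ (expand the matrix product); strong connectivity gives such a walk for some $t\geq 1$ for every ordered pair; and pigeonhole with loop-excision shortens any walk to one of length at most $n$ (a simple path of length $\leq n-1$ when $i\neq j$, a simple cycle of length $\leq n$ when $i=j$). This is exactly the paper's argument, which cites \cite[Theorem 2.1]{Plemmons} for the reduction to ``some $t\geq 1$'' and then performs the same excision. Delete the bridge and what remains is the paper's proof.
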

\begin{proof}
By \cite[Theorem 2.1]{Plemmons}, the statements are equivalent, when in (b) the condition $1\leq t\leq n$ is replaced by the weaker condition $t\geq 1$.

So it is left to show that $t$ can be chosen such that $t\leq n$. Note that, since $A$ is non-negative, the irreducibility of $A$ is equivalent to the adjacency matrix $B$ of a graph being irreducible, where each element of $B$ is replaced by $1$, if it is strictly positive. Therefore, without loss of generality $B=A$. Now (a) essentially means that there always exists a path from $i$ to $j$, its length being $t$. Assume $t\geq n+1$. In fact, $(A^t) ^i_j>0$ means, by the very definition of matrix product, that there exists a sequence of indices 
\[
i_0=i,i_1,\dots,i_t=j
\]
such that the sequence of edges $i_l i_{l+1}$, $l=0,1,\dots,t$ define a path. The number of nodes $i_1,i_2, i_t$ is thus strictly larger than $n$, and therefore, there exists $r,s>0$ such that $i_{r}=i_{s}$. That means, one can reduce the length of the path at least by size one.  By repeating this argument, we get a path with length $\leq n$.
\end{proof}

\begin{theorem}\label{th: irr char disc time}
Let $\mu(t)$ be a branching process with finite first moments, that is $A^i_j<\infty$ for all $1\leq i,j\leq n$. Then the following are equivalent:
\begin{enumerate}
\item $\mu(t)$ is irreducible,
\item $A$ is irreducible.
\end{enumerate}
\end{theorem}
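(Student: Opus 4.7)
The plan is to reduce the equivalence to a combinatorial statement about powers of the mean matrix. Observe that $\mu(t)$ is irreducible precisely when $T_i \to T_k$ for every ordered pair $(i,k)$, while by Proposition \ref{th: irr mat char}, $A$ is irreducible precisely when for every pair $(i,k)$ there exists $1 \leq m \leq n$ with $(A^m)^i_k > 0$. So the whole theorem reduces to proving
\[
T_i \to T_k \quad\iff\quad \exists\, m \geq 1 :\ (A^m)^i_k > 0.
\]

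The one-step ingredient is the identity $A^i_j = \mathbb{E}[\nu^i_j]$ coming from \eqref{eq: moments 1}: since $\nu^i_j$ is $\mathbb{N}_0$-valued and the moment is finite, $A^i_j > 0$ is equivalent to $\mathbb{P}[\nu^i_j \geq 1] > 0$, that is, a $T_i$-particle produces at least one $T_j$-offspring with positive probability in a single generation. Expanding
\[
(A^m)^i_k \;=\; \sum_{i_1,\ldots,i_{m-1}} A^{i}_{i_1}\,A^{i_1}_{i_2}\cdots A^{i_{m-1}}_{k},
\]
we see that $(A^m)^i_k > 0$ is equivalent to the existence of an index sequence $i = i_0, i_1, \ldots, i_m = k$ along which every factor $A^{i_\ell}_{i_{\ell+1}}$ is strictly positive. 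The easy direction $(\Leftarrow)$ is then handled by the branching property: given such a sequence, I would construct inductively a single lineage in which the $(\ell+1)$-th generation contains a $T_{i_{\ell+1}}$-descendant with positive probability, and take $t$ larger than the (almost surely finite) death-time of the $(m-1)$-th particle in the chain to deduce $\mathbb{P}[\mu^i_k(t) > 0] > 0$.

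The main obstacle is the converse $(\Rightarrow)$, because $T_i \to T_k$ is an event on the entire genealogical history rather than a single-generation event. My plan is: pick $t > 0$ with $\mathbb{P}[\mu^i_k(t) > 0] > 0$ and work on that event. Finiteness of all first moments $A^i_j$ ensures that the realized genealogical tree up to time $t$ is almost surely finite, so I can select a $T_k$-descendant of the root present at time $t$ and trace its ancestry back to the root, producing a finite chain of types $i = j_0, j_1, \ldots, j_m = k$. Along each link of that chain the parent produces the recorded child with positive probability, so by the one-step remark $A^{j_\ell}_{j_{\ell+1}} > 0$ for every $\ell$, and hence $(A^m)^i_k > 0$. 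Applying the equivalence for every pair $(i,k)$ closes the proof.
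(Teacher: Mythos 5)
Your proposal is correct and shares the paper's overall architecture: both reduce irreducibility of the process to the statement that for every ordered pair $(i,k)$ some power of $A$ has strictly positive $(i,k)$ entry, and then invoke Proposition \ref{th: irr mat char}. Where you differ is in how that middle equivalence is established. The paper first argues that irreducibility does not depend on the timing of deaths, passes to the embedded discrete-time process with offspring PGFs $h^i$, and then uses the moment identity $\mathbb E[\mu^i_j(t)]=(A^t)^i_j$ (citing Sewastjanow) together with the elementary fact that a nonnegative integer random variable is positive with positive probability iff its mean is positive; the combinatorial expansion of $(A^t)^i_j$ then does the rest. You instead stay with the age-dependent process and argue directly on genealogies: a one-step lemma ($A^i_j>0$ iff a $T_i$-particle has a $T_j$-child with positive probability), a forward lineage construction for $(\Leftarrow)$, and ancestry-tracing for $(\Rightarrow)$. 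Your route avoids citing the moment recursion and makes the probabilistic content more explicit, at the cost of two small points you should tighten: in $(\Leftarrow)$ it is not enough to take $t$ beyond the death time of the $(m-1)$-th particle --- you need a \emph{deterministic} $t$ at which the terminal $T_k$-particle is alive with positive probability, which follows by a countable-union argument over, say, rational $t$ since the chain completes in a.s.\ finite time and the last particle has positive lifetime; and in $(\Rightarrow)$ the appeal to a.s.\ finiteness of the whole tree is unnecessary (and not quite what finite first moments gives you) --- all you need is that each particle has a finite ancestral line, which holds by construction, plus the observation that the countably many possible type chains partition the event $\{\mu^i_k(t)>0\}$, so at least one chain occurs with positive probability and each of its links forces the corresponding entry of $A$ to be positive.
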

\begin{proof}
The property of irreducibility is about the transformation of particle types, not the timing of their death. Therefore $\mu(t)$ is irreducible if and only if
the discrete time-process with generating functions $h^i$, $1\leq i\leq n$ is irreducible, and we shall only consider the latter henceforth. Its matrix of first moments is given by
\eqref{eq: moments 1}.

$\mu(t)$ is irreducible if and only if for each pair $i,j$ there exists a $t>0$ such that $\mathbb P[\mu^i_j(t)>0]>0$. Since $\mu^i_j(t)$ is a non-negative random variable, this is equivalent to the statement that for each pair $i\neq j$ there exists a $t>0$ such that $\mathbb E[\mu^i_j(t)]>0$. By \cite[Chapter IV.4]{Sewastjanov}, $\mu^i_j(t)=(A^t)^i_j$,
where $A^t$ is the $t$-th power of the first moment matrix $A$. Hence, $\mathbb P[\mu^i_j(t)>0]>0$ is equivalent to the existence of a sequence $i_0=i, i_{1},\dots, i_t=j$ such that
\[
A^i_{i_1}A^{i_1}_{i_2}\dots A^{i_{t-1}}_k>0.
\]
Then, by the pigeon hole principle, there must exists a $t$ satisfying the same, however at a perhaps earlier time $t\leq n$. Hence, $\mu(t)$ irreducible is equivalent that for any $i,j$
there exists $t\leq n$ such that $(A^t)^i_j>0$. This is, due to Proposition \ref{th: irr mat char} equivalent to $A$ being irreducible.
\end{proof}
Let $A$ be a non-negative, irreducible matrix. Then there is a unique, strictly positive right eigenvector $u$ and a unique strictly positive left eigenvector $v$ 
associated to $\rho(A)$, normalized such that $v^\top u=1$, and $\sum u^i=1$. Criticality is defined as follows:
\begin{definition}\label{def: crit}
Suppose $\mu(t)$ is an age-dependent, irreducible branching process. Then we call $\mu(t)$
\begin{itemize}
\item subcritical, if $\rho(A)<1$,
\item critical, if $\rho(A)=1$ and $\sum_{i,j,k}v^i B^i_{jk}u_j u_k>0$,
\item supercritical, if $\rho(A)>1$.
\end{itemize}
\end{definition}
\begin{remark}\label{rem: irr}
\cite{Sewastjanov} defines critical behaviour for processes in discrete time as in Definition \ref{def: crit}, while for age-dependent processes he requires that the sub-process constructed from non-final particles satisfies $\rho(A)=1$. This surprising conflict is resolved by realizing that in the age-dependent setup, Sewastjanow does not require irreducibility for the definition of subcritical, critical, and supercritical behaviour. As it is not consistent with the notion found earlier in his book (in the context of discrete or continuous-time branching processes), we refrain from using it.                 
\end{remark}

\section{Meme spreading on Social Media Platforms}\label{sec: Models}
\subsection{Model 1: Static Network}\label{Sec: Model1}
We consider a social media network with $n$ users. Each user $1\leq i\leq n$  receives a stream of so-called memes, separated by time stamps, from the accounts she follows. With probability $\lambda_{ki}$, a meme from account $k$ followed is considered interesting enough to enter the stream, but we condition now on this event that it has been deemed interesting. Once it enters the stream, its existence starts. We are studying the evolution not of all memes, but of one special meme, possibly existing in multiplicity on user $i$'s account. Using the language of branching processes, a meme populating user $i$'s account is identified with a particle of type $T^i$. The branching property is imposed, such that two particles of any type evolve independently.

A particle of type $T^i$ dies, when user $i$ decides that the corresponding meme is shared (in which case it is replaced by this new post), or when reading it, the meme is deemed not worth to be shared (as then the chance that one considers it worth to be shared at a later stage is negligible. Alternatively, think of deleting it.). We assume that user $i$ considers the meme not worth to be shared with the so-called innovation probability $\mu_i \in [0,1)$, in which case the user composes an unrelated meme. In this case the aforementioned particle dies without having descendants. However, if it the meme is shared by user $i$, it enters the stream of all her followers, and each of them will accept it into their stream with probability $\lambda_{ij}\in [0,1]$.

We assume that memes enter the stream of user $i$ according to a Poisson process with rate
\begin{equation}\label{eq: ri}
r_i=\beta_i\mu_i+\sum_{k\neq i}\lambda_{ki}\beta_k,
\end{equation}
where $\beta_k$ are the ``activity rates" of user $k$, $1\leq k\leq n$. We have excluded the $i$-th summand $\lambda_{ii}\beta_i=\beta_i$, as we assume that when a meme is shared (and therefore adds to the stream of memes) its ancestor is deleted, and therefore the activity rate $\beta_i$ should not contribute to the total rate $r_i$.\footnote{The particular formula \eqref{eq: ri} is an exogenous assumption of the model and implies (sub)criticality of the model, as seen below.} Therefore, the occupation time of the meme on user $i$'s account is exponentially distributed, $\tau^i_o\sim \mathcal E(r_i)$.

Let $dG^i(u)$ be the activity distribution of user $i$, that is, the distribution of the random time $\tau^i$, where she becomes active and looks at her stream. (In consistency with \eqref{eq: ri}, we shall assume later that $\tau^i\sim\mathcal E(\beta_i)$.). We assume that $\tau^i$ are independent of $\tau^i_o$. We identify the activity rate with
the life time of particle of type $T^i$. Therefore, in the event $\tau^i<\tau^i_o$, the particle dies, and only with probability $(1-\mu_i)$ she shares and only in this scenario can particle $i$ produce descendants.

Thus, conditional on the event $\{\tau^i=u\}$, where $u\geq 0$, we have 
\begin{equation}\label{eq: kappa}
\mathbb P[\tau^i<\tau^i_o\mid \tau^i=u]=\kappa^i(u)=e^{-r_i u}.
\end{equation}

The age-dependent particle distribution of a type $i$ particle, (that is the composition of descendants, conditional that it dies at an age $u\geq 0$)
is therefore determined by the PGFs
\[
h^i(s;u):=\sum_\alpha p^i_\alpha(u) s^\alpha,
\]
where
\begin{equation}\label{eq: super}
h^i(s; u)=\kappa_0^i(u)+\kappa^i(u)s^i \prod_{j\neq i}\left(1-\lambda_{ij}+\lambda_{ij}s^j\right),
\end{equation}
and the coefficient $\kappa^i$ is given by \eqref{eq: kappa}, whereas
\[
\kappa_0^i(u)=\mu_i+(1-\mu_i)(1-e^{-r_i u}).
\]

\begin{remark}
We have stated here two crucial assumptions concerning independence:
\begin{itemize}
\item First, the branching property makes the evolution of a single particle independent from another one, may it be of the same type or not.
\item Second, conditional on the death of particle $i$, the number of immediate descendants of type $j$ is independent of the number of immediate descendants of type $k$ (in fact, they are mutually independent Bernoulli random variables with parameters $\lambda_{ij}$ resp. $\lambda_{ik}$.)
\end{itemize}
\end{remark}

It remains to specify the life-time distribution $G^i(t)$ of a type $T^i$- particle. The joint distribution of descendants $\nu^i$ and the life-time $\tau^i$ is given by
\[
\mathbb P[\tau_i\leq t, \nu^i=\alpha]=\int p_\alpha^i(u)dG_i(u),
\]
so that the marginal distribution $\tau^i$ is given by
\[
\mathbb P[\tau^i\leq t]=\int_0^t dG^i(u).
\]

We compute now first and second moments. For convenience, we introduce $\lambda_{ii}=1$ for $1\leq i\leq n$, as then
\eqref{eq: super} becomes
\begin{equation}\label{eq: superx}
h^i(s; u)=\kappa_0^i(u)+\kappa^i(u) \prod_{j=1}^n\left(1-\lambda_{ij}+\lambda_{ij}s^j\right).
\end{equation}

\begin{lemma}
The matrix $A:=(A_{ij})_{ij}$ of first moments is given by
\begin{equation}\label{eq: A1}
A_{ij}=\int_0^\infty\frac{\partial h^i(s;u)}{\partial s^j} dG^i(u)= \lambda_{ij}\int_0^\infty \kappa^i(u)dG^i(u).
\end{equation}
The matrices $B^i:=(B^i_{kl})_{kl}$ ($1\leq i\leq n$) of second moments is given by
\begin{equation}\label{eq: B2}
B^i_{kl}=\int_0^\infty\frac{\partial^2 h^i(s;u)}{\partial s^k\partial s^l} dG^i(u)=(1-\delta_{kl}) \lambda_{ik}\lambda_{il} \int_0^\infty \kappa^i(u)dG^i(u).
\end{equation}
\end{lemma}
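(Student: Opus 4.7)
The plan is a direct computation from the explicit formula \eqref{eq: superx} for $h^i(s;u)$, exploiting the fact that the product $P(s) := \prod_{j=1}^n(1 - \lambda_{ij} + \lambda_{ij} s^j)$ is multilinear, i.e., at most linear in each variable $s^j$ separately. Since $\kappa_0^i(u)$ does not depend on $s$, it disappears under any $s$-derivative, so only the $\kappa^i(u)\,P(s)$ term contributes.

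First I would differentiate $P$ with respect to $s^j$ to obtain
\[
\frac{\partial P}{\partial s^j}(s) = \lambda_{ij}\prod_{k\neq j}\bigl(1 - \lambda_{ik} + \lambda_{ik} s^k\bigr).
\]
Evaluating at $s=1$, every remaining factor collapses to $1$, leaving $\lambda_{ij}$. Multiplying by $\kappa^i(u)$ and integrating against $dG^i(u)$ yields the first identity of \eqref{eq: A1}. The special case $j=i$ is consistent because $\lambda_{ii}=1$ by the convention introduced before \eqref{eq: superx}. For second derivatives with $k\neq l$, one applies the same argument one more time, picking up the factor $\lambda_{ik}\lambda_{il}$ times a product of terms that each equal $1$ at $s=1$. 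When $k=l$, multilinearity of $P$ in $s^k$ forces $\partial^2 P/(\partial s^k)^2 \equiv 0$, which accounts for the $(1-\delta_{kl})$ prefactor in \eqref{eq: B2}.

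The only non-computational point to address is the interchange of $\partial/\partial s^j$ (and $\partial^2/\partial s^k\partial s^l$) with the integral $\int_0^\infty \cdot\, dG^i(u)$, which is implicit in the statement. This follows from dominated convergence: on $[0,1]^n$, each factor $1-\lambda_{ij}+\lambda_{ij}s^j$ lies in $[0,1]$, so the partial derivatives of $P$ are bounded by $1$ uniformly in $s$; combined with $0\le \kappa^i(u)\le 1$ and the fact that $dG^i$ is a probability measure, the integrands and their derivatives are dominated by the constant $1$.

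Honestly, there is no real obstacle here—this is essentially bookkeeping made clean by the multilinear structure of $P(s)$. The only thing to be careful about is not to forget that $\lambda_{ii}=1$ is a convention of the present section (not a modelling assumption), so the case $j=i$ or $k=i$ or $l=i$ is handled uniformly by the same formula rather than as a boundary case.
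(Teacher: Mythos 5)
Your computation is correct and is exactly the routine verification the paper leaves implicit: the lemma is stated without proof in the source, and the intended argument is precisely this differentiation of the multilinear product in \eqref{eq: superx} at $s=1$, with the $(1-\delta_{kl})$ factor coming from linearity in each $s^k$. Your added remarks on the convention $\lambda_{ii}=1$ and on dominated convergence are sound but not needed beyond what you wrote; nothing is missing.
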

Note that, if $\tau_a^i\sim\mathcal E(\beta_i)$, then we obtain due to \eqref{eq: kappa}\footnote{Note that the ratio expresses the following fact: For two independent Poisson processes with rate $a,b$ then the probability that the first jumps before the second one is $a/(a+b)$, and here the first jump times of these processes would be independent, and exponentially distributed with rates $\beta_i $ and $r_i$, respectively.}
\begin{equation}\label{eq: super ratio}
\int_0^\infty \kappa^i(u)dG^i(u)=\frac{\beta_i}{\beta_i+r_i}=\frac{\beta_i}{\mu_i\beta_i+\sum_{k=1}^n \beta_{k}\lambda_{ki}},
\end{equation}
where we have used the fact that $\lambda_{ii}=1$.

\subsubsection{Skew-(Sub)Stochastic Matrices}\label{sec: Matrices}

By definition, a stochastic matrix  $B:=(b_{ij})_{1\leq i,j\leq n}$ can be obtained by scaling each element $a_{ij}$ of a non-negative matrix $A$ with its corresponding row sum $\sum_k a_{ik}$. In this section, we introduce an unusual class of so-called Skew-(sub)stochastic matrices $A:=(a_{ij})_{1\leq i,j\leq n}$, which, in their simplest form, originate from non-negative matrices whose row elements $a_{ij}$ are scaled by their corresponding column sum $\sum_{k}a_{ki}$. Such matrices arise naturally as moment matrices in \cite{d2019spreading}, where the influence $a_{ij}$ of user $i$ on another user $j$ must be weighted by the influence $\sum_k a_{ki}$ of those accounts accounts $k$ that $i$ follows. The rationale behind this scaling is that, the more accounts user $i$ follows, the more information arrives at her account, whence the less likely it is that user $i$ shares relevant information with follower $j$. 

Skew-(sub)stochastic matrices are similar to (sub)-stochastic matrices, and therefore exhibit similar spectral properties (See Theorem \ref{thm: skew stoch} below). Before we come to that, let us first start with a formal definition:
\begin{definition}\label{def: skew stoch}
A non-negative matrix $B$ is skew-sub-stochastic, if there are $a_{ij}$, $\gamma_i\in (0,1]$ and $G_i\geq 0$ ($1\leq i,j\leq n$) such that
$G_i+\sum_k a_{ki}>0$ for $1\leq i\leq n$, and
\begin{equation}\label{eq: Skew matrix}
b_{ij}=\frac{\gamma_i a_{ij}}{G_i+\sum_{k}a_{ki}}, \quad 1\leq i,j\leq n.
\end{equation}
\end{definition}
$B$ is called skew-stochastic, if it is skew-sub-stochastic, with $\gamma_i=1$, $G_i=0$ for $1\leq i\leq n$.

By definition, any skew-stochastic matrix is skew-sub-stochastic. However, a skew-(sub)stochastic matrix is, in general, not (sub)stochastic. For example, consider
\[
A=\left(\begin{array}{ll} 1 & 3\\ 1 & 0\end{array}\right), \quad \text{then}\quad B=\left(\begin{array}{ll} 1/2 & 3/2\\ 1/3 & 0\end{array}\right).
\]
The new matrix $B$ is neither row-sub-stochastic (its row sums are not bounded by one), nor column-sub-stochastic. In fact, one element of $B$ is strictly larger than $1$, thus does not qualify for a probability. However, the spectrum of $B$ is given by $\sigma(B)=\{-1/2,1\}$ and therefore the spectral radius equals $1$. Let $D=\text{diag}(G_i+\sum_{k}a_{ki}+1_{\text{if}\,\sum_{k}a_{ki}=0})$. Then, $B=D^{-1} A$ is of similar to$B$, as
\[
C=A D^{-1}=\left(\begin{array}{ll}1/2 & 1\\1/2 &0\end{array}\right)
\]
which is a stochastic matrix, whence due to similarity, $\rho(B)=\rho(C)=1$. The spectral property of this example is not artificial, but a general feature (see Figure \ref{figure1} for an illustration):
\begin{theorem}\label{thm: skew stoch}
Let $B=(b_{ij})_{ij}$ be a non-negative $n\times n$ matrix of form \eqref{eq: Skew matrix} with spectral radius $\rho(B)$. The following hold:
\begin{enumerate}
\item \label{Skew0: pos} $\rho(B)$ is an eigenvalue of $B$.
\item  \label{Skew2: pos} $\rho(B)\leq 1$. 
\item  \label{Skew4: pos} Suppose for each index $1\leq i\leq n$ that $\sum_k a_{ki}\neq 0$. If, in addition, for any $i$, one of the following conditions is satisfied, 
\begin{enumerate}
\item \label{yy1} $a_{ii}\neq 0$ and $\gamma_i<1$,
\item \label{yy2} $G_i>0$,
\end{enumerate}
then $\rho(B)<1$.
\item  \label{Skew6: pos}  Suppose $B$ (or, equivalently, $A$) is irreducible. Then for each index $1\leq i\leq n$ that $\sum_k a_{ki}\neq 0$. If, in addition, for some $i\in\{1,\dots,n\}$, one of the following conditions is satisfied, 
\begin{enumerate}
\item \label{yy1} $a_{ii}\neq 0$ and $\gamma_i<1$,
\item \label{yy2} $G_i>0$,
\end{enumerate}
then $\rho(B)<1$.
\item  \label{Skew3: pos} If $B$ is a skew- stochastic matrix, then $\rho(B)=1$.\footnote{Point \eqref{Skew3: pos} and its proof is similar to a statement found in \cite[Appendix A]{d2019spreading}. The rest of Theorem \ref{thm: skew stoch} is new.}
\end{enumerate}
\end{theorem}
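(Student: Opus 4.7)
The plan is to reduce every claim to a standard fact about column-(sub)stochastic matrices via a diagonal similarity. Set $d_i := G_i + \sum_k a_{ki}$, which is strictly positive by the definition of skew-sub-stochastic, and let $D = \operatorname{diag}(d_1,\dots,d_n)$, $\Gamma = \operatorname{diag}(\gamma_1,\dots,\gamma_n)$. Then $B = \Gamma D^{-1} A$, and since diagonal matrices commute,
\[
M := D B D^{-1} = \Gamma A D^{-1}, \qquad M_{ij} = \frac{\gamma_i a_{ij}}{d_j}.
\]
Because $M$ is similar to $B$ we have $\sigma(B) = \sigma(M)$, so it suffices to analyze $M$. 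This reduction is the crux of the argument; once it is in place, \eqref{Skew0: pos} and \eqref{Skew2: pos} follow quickly: $M$ is nonnegative, so Perron--Frobenius gives that $\rho(M)$ is an eigenvalue (this gives \eqref{Skew0: pos}), and the $j$-th column sum of $M$ is
\[
\sum_i M_{ij} = \frac{\sum_i \gamma_i a_{ij}}{G_j + \sum_k a_{kj}} \le \frac{\sum_i a_{ij}}{G_j + \sum_k a_{kj}} \le 1,
\]
since $\gamma_i \le 1$ and $G_j \ge 0$. Hence $\rho(M) \le \|M^\top\|_\infty \le 1$, giving \eqref{Skew2: pos}. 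For \eqref{Skew3: pos}, with $\gamma_i \equiv 1$ and $G_i \equiv 0$ each column sum equals exactly $1$, so $M^\top$ is stochastic; $\mathbf{1}$ is a left eigenvector for eigenvalue $1$, forcing $\rho(B) = \rho(M) = 1$.

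For \eqref{Skew4: pos} I will show that the hypotheses make \emph{every} column sum strictly less than $1$. If $G_j > 0$ the numerator is bounded by $\sum_k a_{kj} < \sum_k a_{kj} + G_j$. If instead $a_{jj} > 0$ and $\gamma_j < 1$, then in the sum $\sum_i \gamma_i a_{ij}$ the $i=j$ term contributes $\gamma_j a_{jj} < a_{jj}$, while the remaining terms are bounded by $\sum_{i\neq j} a_{ij}$, so the numerator is strictly less than $\sum_k a_{kj} \le d_j$. Thus $\|M^\top\|_\infty < 1$ and $\rho(B) < 1$.

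For the irreducible case \eqref{Skew6: pos}, I first note that irreducibility of $A$ (equivalently of $B$ or $M$, since the sparsity patterns coincide) forces every column of $A$ to contain a nonzero entry, so $\sum_k a_{ki} > 0$ automatically; this is why the hypothesis ``$\sum_k a_{ki}\neq 0$ for every $i$'' can be dropped. The argument that one strictly sub-stochastic column suffices is the place I expect the main (though modest) technical obstacle: the max-column-sum bound no longer gives strict inequality, so I appeal to Perron--Frobenius for irreducible nonnegative matrices. Let $u > 0$ be the right Perron eigenvector of $M$ with $Mu = \rho(M) u$, and let $c_j := \sum_i M_{ij}$. Pairing with $\mathbf{1}$,
\[
\rho(M)\, \mathbf{1}^\top u \;=\; \mathbf{1}^\top M u \;=\; \sum_j c_j u_j.
\]
Each $c_j \le 1$ by the general estimate and, by the argument of the previous paragraph applied to the distinguished index $i$, $c_i < 1$. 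Since $u > 0$ componentwise, $\sum_j c_j u_j < \sum_j u_j = \mathbf{1}^\top u$, and therefore $\rho(B) = \rho(M) < 1$, completing \eqref{Skew6: pos}.
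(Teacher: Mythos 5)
Your proof is correct. For parts \eqref{Skew0: pos}, \eqref{Skew2: pos} and \eqref{Skew3: pos} it coincides with the paper's argument: both conjugate $B$ by a diagonal matrix to obtain a column-(sub)stochastic matrix and read off the spectral radius from column sums (the paper uses $AD$ with $D=\mathrm{diag}(\gamma_i/d_i)$ rather than your $\Gamma AD^{-1}$, a cosmetic difference); in fact your treatment of \eqref{Skew3: pos} is slightly more complete, since you exhibit $\mathbf{1}$ as a left eigenvector to conclude $\rho=1$, whereas the paper's proof of that item only records $\rho(B)\le 1$. Where you genuinely diverge is in the strict-inequality parts. The paper argues by contradiction: assuming $\rho(B)=1$, it shows $G(B-\rho I)$ would be diagonally dominant, strictly so wherever condition (a) or (b) holds, hence nonsingular; for \eqref{Skew6: pos} it must then invoke the Taussky-type theorem that an irreducible, weakly diagonally dominant matrix with one strictly dominant row is nonsingular (Horn--Johnson, Theorem 6.2.27). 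You instead show directly that every (respectively, one) column sum of $M$ is strictly below $1$, and in the irreducible case close the argument by pairing the column-sum vector against the strictly positive Perron eigenvector of $M$. This buys a self-contained, two-line replacement for the cited diagonal-dominance theorem and avoids the contradiction setup; the only ingredients you must (and do) justify are that $\gamma_i>0$ and $d_i>0$ preserve the zero pattern, so that $M$ inherits irreducibility from $A$ and its Perron vector is strictly positive.
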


\begin{proof}
Proof of \eqref{Skew0: pos}: It is a well-know fact that non-negative matrices have the property that the spectral radius is an eigenvalue (\cite[Chapter 1, Theorem (3.2)(a)]{Plemmons}. For an elegant proof due to Karlin, using properties of the resolvent, see \cite[Lemma in Section 4]{maccluer2000many}.) For the rest of this proof, we use the abbreviation $\rho:=\rho(B)$. 

We prove next \eqref{Skew2: pos}: We can write $B=DA$, where $D$ is is the diagonal matrix with entries $d_{ii}=\gamma_i/(G_i+\sum_k a_{ki})$. Thus, by construction, $B$ is similar to $AD=D^{-1} B D$, which is a sub-stochastic matrix. Due to similarity, $\rho(B)=\rho(AD)\leq 1$. 

Proof of \eqref{Skew4: pos}: Knowing that $\rho(B)\leq 1$ (see \eqref{Skew2: pos}), let us assume, for  a contradiction, $\rho(B)=1$.  Define $B(\rho):=B-\rho I_n$, where $I_n$ is the $n\times n$ unit matrix.  Let $G=\text{diag}(G_i+\sum_k a_{ki}, 1\leq i\leq n)$. The matrix $G$ is invertible, as by assumption $G_i+\sum_k a_{ki}>0$, $1\leq i\leq n$ . Since $B(\rho)$ is singular, so is $GB(\rho)$. Denote its elements by $d_{ij}$. For each $1\leq i\leq n$, the diagonal element $d_{ii}$ satisfies
\begin{equation}\label{eq: id1}
\vert d_{ii}\vert=\vert \gamma_i a_{ii}-\rho (G_i+\sum_k a_{ki})\vert=\rho (G_i+\sum_k a_{ki})-\gamma_i a_{ii},
\end{equation}
because either $a_{ii}=0$, in which case the identity is obvious, or $a_{ii}>0$, in which case $\rho\sum_{k} a_{ki}> \sum_{k} a_{ki}\geq  a_{ii}\geq \gamma_i a_{ii}$.

Using \eqref{eq: id1},
we get
\[
\vert d_{ii}\vert=(\rho-\gamma_i)a_{ii}+\rho G_i+\rho\sum_{k\neq i}a_{ki}\geq (1-\gamma_i)a_{ii}+G_i>0,
\]
where the last inequality is due to either \eqref{yy1} or \eqref{yy2}. Thus $B(\rho)$ is strictly diagonally dominant, whence invertible, an impossibility. Thus we have proved $\rho(B)<1$.

The proof of \eqref{Skew6: pos} is similar to the proof of \eqref{Skew4: pos}, noting that due to irreducibility of $A$, strict diagonal dominance for one row (and weak diagonal dominance elsewhere) suffices to obtain the same conclusion (cf.~\cite[Theorem 6.2.27]{horn2012matrix}).

Proof of \eqref{Skew3: pos}: Finally, using the same argument as in the proof of \eqref{Skew2: pos}, we see that $B$ is similar to a stochastic matrix, hence $\rho(B)\leq 1$. 
\end{proof}

\begin{figure}
\includegraphics[width=0.8\textwidth]{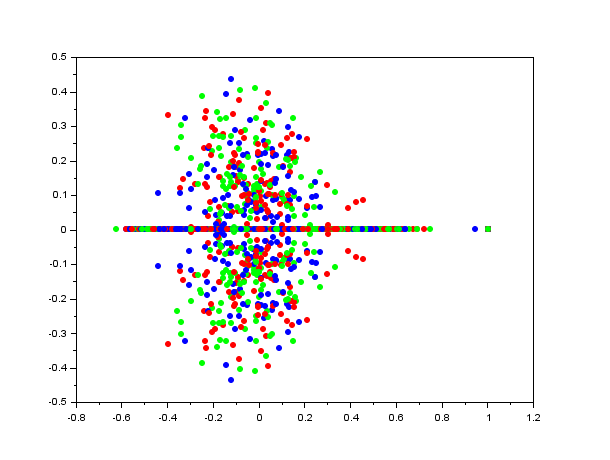}
\caption{The largest eigenvalue of skew-stochastic matrices equals one. (Theorem \ref{thm: skew stoch}). For the figure, we plotted the eigenvalues of skew-stochastic matrices $A$ in the complex plane, which are derived from $10000$ randomly generated $5\times 5$ matrices $A$ and $B$ are the column-sum scaled
matrices $A$, whose entries are independent, chi-square distributed with one degree of freedom.}
\label{figure1}
\end{figure}

\subsubsection{Characterization for Irreducible Matrices}
We have seen that, under mild non-degeneracy conditions, any non-negative skew-sub-stochastic matrix satisfies $\rho(B)\leq 1$. A. Neumaier (Vienna) conjectured
in private communication that the converse also holds.\footnote{I thank A. Neumaier for the proof of the implication \eqref{arnold1} $\Rightarrow$ \eqref{arnold2} in Theorem \ref{th: arnold}.} As we prove next,
this is indeed true under the additional assumption of irreducibility. However Example \ref{ex counter} shows that the assumption of irreducibility cannot, in general, be dropped.
\begin{theorem}\label{th: arnold}
Let $B$ be a non-negative irreducible matrix. The following are equivalent:
\begin{enumerate}
\item \label{arnold1} $\rho(B)\leq 1$.
\item \label{arnold2} $B$ is skew-sub-stochastic, that is: there exist $a_{ij}\geq 0$, $\gamma_i\in (0,1]$ and $G_i\geq 0$ ($1\leq i,j\leq n$) such that
\eqref{eq: Skew matrix} holds.
\end{enumerate}
Furthermore, $\rho(B)=1$ if and only if $\gamma_i=G_i=0$ for $1\leq i\leq n$.
\end{theorem}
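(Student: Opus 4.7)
The direction \eqref{arnold2}$\Rightarrow$\eqref{arnold1} is already contained in Theorem \ref{thm: skew stoch}\eqref{Skew2: pos}, which asserts that any skew-sub-stochastic matrix has spectral radius at most one; no additional work is required. The real content is the converse \eqref{arnold1}$\Rightarrow$\eqref{arnold2}, and my plan is to construct an explicit skew-sub-stochastic representation of $B$ out of the Perron--Frobenius data of $B$ itself.

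Write $\rho := \rho(B) \leq 1$. Since $B$ is non-negative and irreducible, Perron--Frobenius yields a strictly positive left eigenvector $v = (v_1,\dots,v_n)$ with $v^\top B = \rho\, v^\top$, i.e.\ $\sum_k v_k b_{ki} = \rho v_i$ for every $i$. Guided by the desired identity \eqref{eq: Skew matrix}, I would set
\[
a_{ij} \;:=\; v_i\, b_{ij}, \qquad \gamma_i \;:=\; 1, \qquad G_i \;:=\; (1-\rho)\, v_i .
\]
Then $a_{ij}\geq 0$, $\gamma_i \in (0,1]$, and $G_i \geq 0$ because $\rho \leq 1$ and $v_i > 0$. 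The column sums compute to $\sum_k a_{ki} = \rho v_i$, so $G_i + \sum_k a_{ki} = v_i > 0$, and a one-line substitution gives
\[
\frac{\gamma_i\, a_{ij}}{G_i + \sum_k a_{ki}} \;=\; \frac{v_i\, b_{ij}}{(1-\rho) v_i + \rho v_i} \;=\; b_{ij},
\]
so $B$ is skew-sub-stochastic in the sense of Definition \ref{def: skew stoch}.

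For the final clause, the same construction does the bookkeeping: if $\rho(B) = 1$, then $G_i = 0$ and $\gamma_i = 1$ for all $i$, so the representation produced above is in fact skew-stochastic; conversely, if $B$ admits a representation with $\gamma_i = 1$ and $G_i = 0$ for all $i$, it is skew-stochastic and then Theorem \ref{thm: skew stoch}\eqref{Skew3: pos} yields $\rho(B) = 1$. (I read the printed ``$\gamma_i = G_i = 0$'' as a typographical slip for ``$\gamma_i = 1$ and $G_i = 0$'', which is the only reading compatible with the constraint $\gamma_i \in (0,1]$ in Definition \ref{def: skew stoch}.)

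I do not anticipate a significant obstacle: the entire argument hinges on guessing the correct ansatz $a_{ij} = v_i b_{ij}$ with $v$ the left Perron eigenvector, after which everything reduces to a direct algebraic verification and a single appeal to Perron--Frobenius for irreducible non-negative matrices. Irreducibility is used exclusively to secure a \emph{strictly positive} left eigenvector, so it cannot be dropped in general, consistent with the remark that Example \ref{ex counter} shows the equivalence fails without it.
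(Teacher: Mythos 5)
Your proof is correct and follows essentially the same route as the paper: both hinge on the ansatz $a_{ij} = v_i b_{ij}$ with $v$ the strictly positive left Perron eigenvector of $B$, so that the column sums become $\sum_k a_{ki} = \rho v_i$ and the verification of \eqref{eq: Skew matrix} is a one-line substitution. The only (cosmetic) difference is where the slack $1-\rho$ is placed --- the paper sets $\gamma_i := \rho$ and $G_i := 0$, whereas you keep $\gamma_i = 1$ and set $G_i := (1-\rho)v_i$ --- and your reading of the final clause as ``$\gamma_i = 1$, $G_i = 0$'' (i.e.\ $B$ skew-stochastic) is exactly what the paper's own proof of that clause establishes.
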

\begin{proof}

The implicationn \eqref{arnold2} $\Rightarrow$ \eqref{arnold1} follows from Theorem \ref{thm: skew stoch} \ref{Skew2: pos},
as any irreducible matrix has non-vanishing column-sums.

We prove next \eqref{arnold1} $\Rightarrow$ \eqref{arnold2}: Let $s=(s^1,\dots,s^n)^\top$ be the the strict positive Perron eigenvector of $B^\top$ associated with $\rho:=\rho(B^\top)=\rho(B)$. Introduce $A=(a_{ij})_{1\leq i,j\leq n}$ by
$a_{ij}=b_{ij}s^i$ for $1\leq i,j\leq n$, and let $G_i=0$, $\gamma_i=\rho$ for $1\leq i\leq n$. Then indeed
\[
\frac{\gamma_i a_{ij}}{G_i+\sum_k a_{ki}}=\frac{\rho b_{ij}s^i}{\sum_k b_{ki}s_k}=\frac{\rho b_{ij} s^i}{\rho s^i}=b_{ij},
\]
and thus $B$ is skew-sub-stochastic.
Finally, if $\gamma_i=G_i=0$ for $1\leq i\leq n$, then $\rho(B)=1$, because $B$ is similar to a stochastic matrix. On the other hand, when $\rho(B)=1$, then by the proof above, $G_i=0$ and $\gamma_i=\rho(B)=1$ for $1\leq i\leq n$.
\end{proof}

The implication \eqref{arnold1} $\Rightarrow$ \eqref{arnold2} fails in general, if one removes irreducibility:

\begin{example}\label{ex counter}
For the matrix
\[
B=\left(\begin{array}{ll}1 & 0\\ 2 & 1\end{array}\right).
\]
the spectral radius $\rho(B)=1$. If $B$ were skew-(sub)stochastic, the following identities had to hold:
\begin{align}\label{eq1}
1&=\frac{\gamma_1 a_{11}}{G_1+a_{11}+a_{21}},\\\label{eq2}
 0&=\frac{\gamma_1 a_{12}}{G_1+a_{11}+a_{21}}\\\label{eq3}
 2&=\frac{\gamma_2 a_{21}}{G_2+a_{12}+a_{22}}\\\label{eq4}
 1&=\frac{\gamma_2 a_{22}}{G_2+a_{12}+a_{22}}
\end{align}

Starting with \eqref{eq4}, we see that due to $\gamma_2\leq 1$ and $G_2\geq 0$, we have $a_{12}=0$, $\gamma_2=1,G_2=0$. Therefore, \eqref{eq2} is automatically satisfied. Similarly, \eqref{eq1} implies $\gamma_1=1$ and $G_1=0$, as well as $a_{21}=0$, which violates \eqref{eq3}. We conclude that $B$ is {\it not} skew-sub-stochastic.
\end{example}
In the previous example, the spectral radius was $1$. Staying away from unit spectral radius allows the following conclusion:
\begin{theorem}
Let $B$ be a non-negative matrix such that $\rho(B)<1$. Then $B$ is skew-sub-stochastic, that is: there exist $a_{ij}\geq 0$, $\gamma_i\in (0,1]$ and $G_i\geq 0$ ($1\leq i,j\leq n$) such that
\eqref{eq: Skew matrix} holds.
\end{theorem}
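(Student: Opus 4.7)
My plan is to mimic the construction from Theorem \ref{th: arnold}, but to replace the Perron eigenvector of $B^\top$ (which need not be strictly positive when $B$ is reducible) with a strictly positive vector that satisfies a suitable strict inequality under $B^\top$. The key observation is that $\rho(B)<1$ is precisely the condition that makes such a vector available cheaply via a Neumann series.

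Concretely, since $\rho(B^\top)=\rho(B)<1$, the matrix $I_n-B^\top$ is invertible with
\[
(I_n-B^\top)^{-1}=\sum_{k=0}^{\infty}(B^\top)^k,
\]
and every summand is non-negative. I will define
\[
s:=(I_n-B^\top)^{-1}\mathbf{1},
\]
where $\mathbf{1}=(1,\dots,1)^\top$. Because the $k=0$ term of the Neumann series contributes $\mathbf{1}$, this $s$ satisfies $s^i\geq 1>0$ for every $i$, and by construction
\[
B^\top s=s-\mathbf{1}, \qquad\text{i.e.}\qquad \sum_k b_{ki}s^k=s^i-1.
\]

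Next, following the scaling idea from the proof of Theorem \ref{th: arnold}, I will set
\[
a_{ij}:=b_{ij}s^i\geq 0,\qquad \gamma_i:=1,\qquad G_i:=1.
\]
Then $\gamma_i\in(0,1]$, $G_i\geq 0$, and
\[
G_i+\sum_k a_{ki}=1+\sum_k b_{ki}s^k=1+(s^i-1)=s^i>0,
\]
so the non-degeneracy condition in Definition \ref{def: skew stoch} holds. A direct substitution then yields
\[
\frac{\gamma_i a_{ij}}{G_i+\sum_k a_{ki}}=\frac{b_{ij}s^i}{s^i}=b_{ij},
\]
which is precisely \eqref{eq: Skew matrix}.

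There is no serious obstacle here: the only subtlety compared with Theorem \ref{th: arnold} is that the strictly positive eigenvector used there need not exist in the reducible setting, and the Neumann-series vector $s=(I_n-B^\top)^{-1}\mathbf{1}$ is a clean substitute because it inherits strict positivity from the identity term and satisfies $B^\top s\leq s$ strictly componentwise. The price paid for dropping irreducibility is that the resulting $G_i$ are strictly positive rather than zero, consistent with the remark after Theorem \ref{th: arnold} that $\rho(B)=1$ forces $G_i=0$.
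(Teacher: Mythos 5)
Your proof is correct, and it reaches the conclusion by a genuinely different route than the paper. The paper argues by perturbation: it disposes of $B=0$ separately, then for $\rho(B)\in(0,1)$ passes to the strictly positive matrix $B^\varepsilon$ with entries $b_{ij}+\varepsilon$, relies on continuity of the spectral radius to keep $\rho^\varepsilon=\rho(B^\varepsilon)\le 1$ for small $\varepsilon$, and uses the strictly positive Perron vector $s^\varepsilon$ of $(B^\varepsilon)^\top$ with the choices $\gamma_i=\rho^\varepsilon$, $G_i=\varepsilon\sum_k s_k^\varepsilon$, $a_{ij}=s_i^\varepsilon b_{ij}$. You instead manufacture the strictly positive scaling vector directly from the resolvent, $s=(I_n-B^\top)^{-1}\mathbf{1}=\sum_{k\ge 0}(B^\top)^k\mathbf{1}\ge\mathbf{1}$, and exploit the exact identity $\sum_k b_{ki}s^k=s^i-1$ with $\gamma_i=1$ and $G_i=1$. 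The two constructions are the same in spirit --- scale row $i$ of $B$ by a strictly positive $s^i$ satisfying $B^\top s\le s$ componentwise and let $G_i$ absorb the slack --- but yours buys a cleaner argument: no case split for $B=0$ (there $s=\mathbf{1}$ and $a_{ij}=0$ works automatically), no appeal to continuity of the spectral radius, and no Perron--Frobenius input beyond convergence of the Neumann series, which is exactly what $\rho(B)<1$ supplies. Your verification that $G_i+\sum_k a_{ki}=1+(s^i-1)=s^i>0$ and that all the constraints of Definition \ref{def: skew stoch} ($a_{ij}\ge 0$, $\gamma_i\in(0,1]$, $G_i\ge 0$) are satisfied is complete.
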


If $B=0$, then we can choose $a_{ij}=0$ for $1\leq i,j\leq n$, and $\gamma_i=G_i=1$ for $1\leq i \leq n$. Next, assume
that $\rho(B)\in (0,1)$. For $\varepsilon>0$ we define $B^\varepsilon$ element-wise as $b_{ij}^\varepsilon:=b_{ij}+\varepsilon$ for $1\leq i,j\leq n$. Suppose $\varepsilon$ is sufficiently small such that $\rho^\varepsilon=\rho(B^\varepsilon)\leq 1$. Since $B^\varepsilon$ is strictly positive, it is irreducible. Let $s^\varepsilon$ be the strictly positive Perron vector of $(B^\varepsilon)^T$, that is $s_i^\varepsilon>0$ and $\sum_k b^\varepsilon_{ki} s_k^\varepsilon=\rho^\varepsilon s_i$ for $1\leq i\leq n$. Put
\[
\gamma_i:=\rho^\varepsilon,\quad G_i:=\varepsilon  \sum_k s_k^\varepsilon,\quad a_{ij}:=s_i^\varepsilon b_{ij}.
\]
Then
\[
\frac{\gamma_i a_{ij}}{G_i+\sum_k a_{ki}}=\frac{\rho^\varepsilon s_i^\varepsilon b_{ij}}{\varepsilon  \sum_k s_k^\varepsilon+\sum_k s_k^\varepsilon b_{ki}}=\frac{\rho^\varepsilon s_i^\varepsilon b_{ij}}{\sum_k s_k^\varepsilon b^\varepsilon_{ki}}=\frac{\rho^\varepsilon s_i^\varepsilon b_{ij}}{\rho^\varepsilon s_i^\varepsilon}=b_{ij},
\] 
and thus \eqref{eq: Skew matrix} holds

\subsubsection{Meme Popularity Subsides.}
For this section, we assume that
$\tau_a^i\sim \mathcal E(\beta_i)$, so that the first moments are, in view of \eqref{eq: super ratio} in combination with \eqref{eq: A1},
\begin{equation}\label{eq: A one network}
A_{ij}=\frac{(1-\mu_i)\lambda_{ij}\beta_i}{\mu_i\beta_i+\sum_{k=1}^n \lambda_{ki}\beta_k}.
\end{equation}
Note that, due to Theorem \ref{th: arnold}, any non-negative matrix with $\rho(A)\leq 1$ has a representation of the form \eqref{eq: A one network}, so that, from this perspective, Model 1 is maximally parameterized.
\begin{proposition}\label{prop: no final}
Any process $\mu(t)$ with branching mechanism \eqref{eq: super} contains no final classes.
\end{proposition}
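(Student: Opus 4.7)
The plan is to argue by contradiction. Suppose $K = \{T^{i_1}, \ldots, T^{i_m}\}$ is a final class of a process governed by \eqref{eq: super}. Then by Definition \ref{def: final}, for each $i \in K$ the unconditional PGF $h^i(s) = \int_0^\infty h^i(s;u)\, dG^i(u)$ must be a linear form in the variables $s^{i_1}, \ldots, s^{i_m}$. Integrating \eqref{eq: super} against $dG^i(u)$ factors the $s$-dependence out of the integral and yields
\[
h^i(s) \;=\; K^i \;+\; L^i\, s^i \prod_{j \neq i}\bigl(1 - \lambda_{ij} + \lambda_{ij} s^j\bigr),
\]
where $K^i := \int_0^\infty \kappa_0^i(u)\, dG^i(u)$ and $L^i := \int_0^\infty \kappa^i(u)\, dG^i(u)$.

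The decisive step is to specialize to $s^{i_1} = \cdots = s^{i_m} = 0$. Since $i \in K$, this substitution in particular sets $s^i = 0$, which annihilates the entire second summand; the result is $h^i\bigl|_{s^{i_1}=\cdots=s^{i_m}=0} = K^i$. A linear form in these variables has vanishing constant term, so finality forces $K^i = 0$. But $\kappa_0^i(u) = 1 - (1-\mu_i)e^{-r_i u}$ is strictly positive on $(0,\infty)$ whenever $\mu_i > 0$ or $r_i > 0$, and $G^i$ charges $(0,\infty)$ under the standing non-degeneracy of the model (in particular in the exponential activity setting $\tau^i \sim \mathcal E(\beta_i)$ with $\beta_i > 0$ invoked in \eqref{eq: A one network}); hence $K^i > 0$, a contradiction that completes the proof.

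The main obstacle is exhibiting $K^i > 0$: the argument breaks in the truly pathological regime $\mu_i = 0$ together with $r_i = 0$ (a non-innovating user receiving no competing meme traffic), in which $\kappa_0^i \equiv 0$ and $\{T^i\}$ with $\lambda_{ij} = 0$ for all $j \neq i$ would genuinely constitute a final class. Such isolated dead-end users are implicitly excluded by the working hypotheses, under which the constant-term argument above directly rules out every candidate $K$, singleton or not. As a structural corroboration useful for $|K| \geq 2$, one can use that $K$ being a mutual-precedence equivalence class forces a direct one-step transition $T^a \to T^b$ between distinct members of $K$ (any intermediate type on a bidirectional chain connecting elements of $K$ itself lies in $K$, since it precedes and is preceded by those elements), and play this against the linearity condition on the monomials of $h^a$ in $\{s^{i_1}, \ldots, s^{i_m}\}$, furnishing a second route to contradiction that does not rely on the exact form of $K^i$.
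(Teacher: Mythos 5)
Your proof is correct and follows essentially the same route as the paper's: both arguments reduce to the observation that $\int_0^\infty \kappa_0^i(u)\,dG^i(u) > 0$, i.e., a particle of any type dies childless with positive probability, which is incompatible with the no-constant-term (equivalently, population-preserving) structure that Definition \ref{def: final} imposes on a final class. You are in fact slightly more careful than the paper, which asserts the positivity of this integral without comment, whereas you correctly flag the degenerate regime $\mu_i = r_i = 0$ as the only place the argument (and indeed the statement) could break down.
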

\begin{proof}
With positive probability, particle $i$ has no descendants upon death. In fact,
the probability that any particle $i$ has no descendants is
\begin{align*}
\mathbb P[\tau^i>0,\nu^i=(0,\dots,0)]&=\int_0^\infty \kappa_0^i dG^i(u)+\int_0^\infty \kappa_i^i dG^i(u)\prod_{j}(1-\lambda_{ij})\\&\geq \int_0^\infty \kappa_0^i dG^i(u)>0.
\end{align*}
Since a final class has the property that the total number of particles of this class stays constant, $\mu(t)$ does not have any final particle class.
\end{proof}

\begin{theorem}\label{prop: subskew social}
The probability of extinction equals one. 
\end{theorem}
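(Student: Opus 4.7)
The strategy is direct: combine the two preparatory results. Theorem \ref{th: irr char disc timec} tells me that unit extinction probability is equivalent to the conjunction of (i) the absence of final classes and (ii) $\rho(A)\le 1$. Condition (i) is precisely what Proposition \ref{prop: no final} has just established, so the whole proof reduces to verifying (ii) for the first moment matrix $A$ given by \eqref{eq: A one network}. My plan is to show that $A$ fits the template of Definition \ref{def: skew stoch} and then invoke Theorem \ref{thm: skew stoch}\eqref{Skew2: pos}.

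To match \eqref{eq: A one network} against \eqref{eq: Skew matrix}, I would choose
\[
a_{ij}:=\lambda_{ij}\beta_i,\qquad \gamma_i:=1-\mu_i,\qquad G_i:=\mu_i\beta_i,\qquad 1\le i,j\le n.
\]
A direct substitution then gives $\gamma_i a_{ij}/\bigl(G_i+\sum_k a_{ki}\bigr)=(1-\mu_i)\lambda_{ij}\beta_i/\bigl(\mu_i\beta_i+\sum_k\lambda_{ki}\beta_k\bigr)=A_{ij}$, exactly recovering \eqref{eq: A one network}. The admissibility requirements of Definition \ref{def: skew stoch} are immediate: $a_{ij}\ge 0$ and $G_i\ge 0$ by construction; $\gamma_i\in (0,1]$ because $\mu_i\in [0,1)$; and the positivity condition $G_i+\sum_k a_{ki}>0$ holds because the convention $\lambda_{ii}=1$ forces $\sum_k a_{ki}\ge \beta_i>0$.

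Having recognised $A$ as skew-sub-stochastic, Theorem \ref{thm: skew stoch}\eqref{Skew2: pos} yields $\rho(A)\le 1$; combined with Proposition \ref{prop: no final}, both hypotheses of Theorem \ref{th: irr char disc timec}(2) are met and the claim follows. I do not anticipate any real obstacle, since the entire purpose of Section \ref{sec: Matrices} was to formalise exactly the class of matrices arising as first moment matrices here. The only substantive step is the observation that the innovation probability $\mu_i$ splits cleanly into a ``damping'' contribution to both $\gamma_i=1-\mu_i$ and $G_i=\mu_i\beta_i$, leaving the pure influence weights $\lambda_{ij}\beta_i$ to play the role of $a_{ij}$; this is the identification which, together with Theorem \ref{th: arnold}, also justifies the remark after \eqref{eq: A one network} that Model 1 is maximally parameterised among models with $\rho(A)\le 1$.
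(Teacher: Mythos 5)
Your proposal is correct and follows exactly the paper's own proof: identify $A$ in \eqref{eq: A one network} as skew-sub-stochastic so that Theorem \ref{thm: skew stoch}\eqref{Skew2: pos} gives $\rho(A)\leq 1$, combine with Proposition \ref{prop: no final} to rule out final classes, and conclude via Theorem \ref{th: irr char disc timec}. Your explicit identification $a_{ij}=\lambda_{ij}\beta_i$, $\gamma_i=1-\mu_i$, $G_i=\mu_i\beta_i$ is a useful detail that the paper leaves implicit.
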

\begin{proof}
The spectral radius of the first moment matrix $A$ satisfies $\rho(A)\leq 1$, as $A$ in \eqref{eq: A one network} is a skew-sub-stochastic matrix (Theorem \ref{thm: skew stoch}). Furthermore, $\mu(t)$ has no final classes due to Proposition \ref{prop: no final}. Thus, by Theorem \ref{th: irr char disc timec}, the probability of extinction equals one.
\end{proof}
\subsubsection{Criticality}
The typical definition of critical behaviour includes the assumption of irreducibility of the branching process (cf.~ Remark \ref{rem: irr}). We therefore start by characterizing irreducibility and then, assuming irreducibility, we characterize critical, sub- and supercritical behaviour.
\begin{proposition}\label{prop: char irr social}
The following are equivalent:
\begin{enumerate}
\item $\mu(t)$ is irreducible,
\item The matrix $A$ is irreducible,
\item $\Lambda:=(\lambda_{ij})_{ij}$ is irreducible.
\end{enumerate}
\end{proposition}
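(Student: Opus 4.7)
The plan is to split the three-way equivalence into two pieces, only one of which requires any real work. For the equivalence of (1) and (2), I would simply invoke Theorem~\ref{th: irr char disc time} applied to the age-dependent process with branching mechanism~\eqref{eq: super} and first moment matrix given by~\eqref{eq: A one network}; the required finiteness $A^i_j<\infty$ is immediate from~\eqref{eq: A one network} itself. This reduces the proposition to establishing (2) $\Leftrightarrow$ (3).

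To handle (2) $\Leftrightarrow$ (3), I would show that $A$ and $\Lambda$ share the same zero--nonzero pattern and then appeal to Proposition~\ref{th: irr mat char} (equivalently, to the graph-theoretic characterization of irreducibility): the irreducibility of a non-negative matrix depends only on the adjacency structure of its associated directed graph. Reading off~\eqref{eq: A one network}, under the standing assumptions $\beta_i>0$ and $\mu_i\in[0,1)$, the numerator factor $(1-\mu_i)\beta_i$ is strictly positive, while the denominator $\mu_i\beta_i+\sum_{k=1}^n \lambda_{ki}\beta_k$ is bounded below by $\lambda_{ii}\beta_i=\beta_i>0$ thanks to the convention $\lambda_{ii}=1$. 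Consequently, $A_{ij}>0$ if and only if $\lambda_{ij}>0$ for every pair $(i,j)$, so $A$ and $\Lambda$ induce the very same directed graph and are simultaneously irreducible.

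I do not anticipate any genuine obstacle. The single step meriting care is the bookkeeping that no scalar prefactor in~\eqref{eq: A one network} can spuriously zero out an entry of $A$ whose $\Lambda$-counterpart is positive, and conversely that $A_{ij}=0$ forces $\lambda_{ij}=0$; both directions follow at once from the strict positivity of $(1-\mu_i)\beta_i$ and of the common denominator, so the argument reduces to a one-line inspection of~\eqref{eq: A one network} once the zero-pattern viewpoint of irreducibility has been invoked.
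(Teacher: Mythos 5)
Your proposal is correct and follows essentially the same route as the paper: the paper likewise reduces (1) $\Leftrightarrow$ (2) to Theorem~\ref{th: irr char disc time} and establishes (2) $\Leftrightarrow$ (3) by observing that $A_{ij}=\lambda_{ij}\int_0^\infty \kappa^i(u)\,dG^i(u)$ differs from $\Lambda$ only by strictly positive row scalings, so the two matrices share the same zero--nonzero pattern. The only cosmetic difference is that you read the positivity of the prefactor off the explicit exponential-activity formula~\eqref{eq: A one network}, whereas the paper argues from the general expression~\eqref{eq: A1}; the substance is identical.
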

\begin{proof}
The matrix of first moments is given element-wise by $A_{ij}=\int_0^\infty \kappa^i(u)dG^i(u)\times\lambda_{ij}$. Both $A$ and $\Lambda$ are non-negative matrices. Since the pre-factor $\int_0^\infty \kappa^i(u)dG^i(u)$ is strictly positive, and only scales rows, its elements are strictly positive if and only if $\Lambda$'s are. Therefore $A$ is irreducible if and only if $\Lambda$ is.
The rest of the claim follows from Theorem \ref{th: irr char disc time}.
\end{proof}

Second, we study the non-degenerateness of the second moment:
\begin{lemma}\label{lem: B2}
If $\mu(t)$ is irreducible, then for all $1\leq i\leq n$ we have $\sum_{k\neq i}\lambda_{ki}\neq 0$, and the second moment $B$ is non-zero.
\end{lemma}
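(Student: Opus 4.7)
The plan is to handle the two claims in sequence. First, I would establish that every type $i$ has some incoming transition from another type, i.e.\ $\sum_{k\neq i}\lambda_{ki}\neq 0$. Then, I would show that the existence of just one such transition already forces one of the matrices $B^i$ to have a non-zero entry, via the explicit formula \eqref{eq: B2} together with the convention $\lambda_{ii}=1$ introduced just before \eqref{eq: superx}.

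For the first claim, I would argue by contradiction. Suppose $\sum_{k\neq i_0}\lambda_{ki_0}=0$ for some $i_0$. Then $\lambda_{ki_0}=0$ for every $k\neq i_0$, so in the directed graph associated to $\Lambda$ (with an edge $k\to j$ whenever $\lambda_{kj}>0$), the node $i_0$ has no incoming edges from any other vertex. However, by Proposition \ref{prop: char irr social} the irreducibility of $\mu(t)$ is equivalent to the irreducibility of $\Lambda$, which is equivalent to strong connectivity of this directed graph. For $n\geq 2$, picking any $j\neq i_0$ there exists a directed walk $j=k_0, k_1,\dots, k_r=i_0$; letting $s^*$ be the smallest index with $k_{s^*}=i_0$, we have $k_{s^*-1}\neq i_0$ and $\lambda_{k_{s^*-1}, i_0}>0$, contradicting the assumption.

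For the second claim, the first part yields indices $k^*\neq i^*$ with $\lambda_{k^*i^*}>0$. Setting $i=k=k^*$ and $l=i^*$ in \eqref{eq: B2}, and using $\lambda_{k^*k^*}=1$ together with the strict positivity of $\kappa^{k^*}(u)=e^{-r_{k^*} u}$, one obtains
\[
B^{k^*}_{k^*,i^*}=\lambda_{k^*i^*}\int_0^\infty \kappa^{k^*}(u)\,dG^{k^*}(u)>0,
\]
so the collection $B=(B^1,\dots,B^n)$ is non-zero.

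The only conceptual step is the graph-theoretic translation in the first claim; everything else is a direct substitution into \eqref{eq: B2}, relying on the diagonal normalization $\lambda_{ii}=1$. I do not expect any real obstacle, beyond being careful that the path argument is phrased in terms of \emph{incoming} edges to $i_0$ and that the minimality of $s^*$ ensures $k_{s^*-1}\neq i_0$.
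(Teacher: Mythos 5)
Your proof is correct, and it differs from the paper's in a way worth noting. For the first claim you give an explicit strong-connectivity argument showing every column of $\Lambda$ has a positive off-diagonal entry; the paper's proof actually skips this part of the statement and only extracts a single row $i_0$ with an off-diagonal entry $\lambda_{i_0 j_0}>0$, so your treatment of the first claim is the more complete one (modulo the implicit assumption $n\geq 2$, which the paper also makes tacitly). For the second claim you simply exhibit one positive entry $B^{k^*}_{k^* i^*}$ using $\lambda_{k^*k^*}=1$ and $(1-\delta_{k^*i^*})=1$, which proves exactly the stated conclusion $B\neq 0$. The paper instead introduces the strictly positive left and right Perron eigenvectors $u,v$ of $A$ and shows the stronger fact $\sum_{i,k,l}v^iB^i_{kl}u^ku^l>0$, which is the quantity actually required by Definition \ref{def: crit} and invoked in Theorem \ref{th: crit irr etc}. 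Your weaker conclusion still suffices downstream, since $B\geq 0$ entrywise and the Perron vectors of an irreducible non-negative matrix are strictly positive, so any single positive entry of $B$ already forces that quadratic form to be positive; but if you intend the lemma to feed directly into the criticality definition, it is worth adding that one-line remark.
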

\begin{proof}
Since $A$ is irreducible, also $\Lambda$ is (Proposition \ref{prop: char irr social}). Since $\lambda_{ii}=1$ for $1\leq i\leq n$, and since $\Lambda$ is irreducible, at least one row $i_0\in \{1,2,\dots,n\}$ of $\Lambda$ must have a
non-zero entry, besides the diagonal one, that is, there must exists $j_0\in\{1,2,\dots, n\}$
such that $\lambda_{ij}>0$. Let $u$ be the left eigenvector associated to $\rho(A)=1$,
and $v$ the right eigenvector of $A$ associated to $\rho(A)$. Since $A$ is a non-negative irreducible matrix, we can pick these eigenvectors to be strictly positive in each entry (\cite[Satz IV.5.4]{Sewastjanov}). Therefore, the second moments \eqref{eq: B2} satisfies
\begin{align*}
\sum_i\sum_{k,l} v^i b^i_{kl}u^k u^l&=\int_0^\infty \kappa^i(u)dG^i(u) \sum_{i,k,l} v^i \lambda_{ik}\lambda_{il}u^k u^l\\&\geq \int_0^\infty \kappa^i(u)dG^i(u) \lambda_{i_0 i_0}\lambda_{i_0 j_0} v^{i_0}u^{i_0}u^{j_0}>0.
\end{align*}
\end{proof}

By the proof of Proposition \ref{prop: subskew social} we know that $\rho(A)\leq 1$. We improve this statement in the following:
\begin{lemma}\label{prop: eigen 1 or not}
Suppose $\mu(t)$ is irreducible. Then,
\begin{enumerate}
\item \label{crit} $\rho(A)=1$, if $\mu_i=0$ for all $1\leq i\leq n$.
\item \label{subcrit} $\rho(A)<1$, if there exists $i\in\{1,\dots,n\}$ such that $\mu_i\in (0,1)$.
\end{enumerate}
\end{lemma}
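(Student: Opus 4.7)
The plan is to recognize the first-moment matrix $A$ given by \eqref{eq: A one network} as a skew-sub-stochastic matrix in the sense of Definition \ref{def: skew stoch}, and then invoke the appropriate parts of Theorem \ref{thm: skew stoch}. Concretely, I would set
\[
a_{ij}:=\lambda_{ij}\beta_i,\qquad \gamma_i:=1-\mu_i,\qquad G_i:=\mu_i\beta_i,
\]
and verify by inspection that these choices reproduce \eqref{eq: A one network} through \eqref{eq: Skew matrix}. Because $\mu_i\in[0,1)$, one has $\gamma_i\in(0,1]$ and $G_i\geq 0$, and thanks to the convention $\lambda_{ii}=1$ together with the standing assumption that the activity rates $\beta_i$ are strictly positive, $\sum_k a_{ki}=\sum_k\lambda_{ki}\beta_k\geq\beta_i>0$. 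Hence all non-degeneracy requirements of Definition \ref{def: skew stoch} are met.

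For part \eqref{crit}, if $\mu_i=0$ for every $i$, then $\gamma_i=1$ and $G_i=0$ for all $i$, so $A$ is in fact skew-stochastic, and Theorem \ref{thm: skew stoch}\eqref{Skew3: pos} immediately yields $\rho(A)=1$.

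For part \eqref{subcrit}, irreducibility of $\mu(t)$ translates via Proposition \ref{prop: char irr social} into irreducibility of $A$, so I am in position to apply Theorem \ref{thm: skew stoch}\eqref{Skew6: pos}. If some $\mu_{i_0}\in(0,1)$, then $G_{i_0}=\mu_{i_0}\beta_{i_0}>0$, so condition \eqref{yy2} of that theorem holds for $i=i_0$, and the conclusion is $\rho(A)<1$.

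The only real obstacle is the algebraic matching of parameters, which is essentially bookkeeping — all the spectral heavy lifting has been performed in Theorem \ref{thm: skew stoch}. A minor delicate point is the implicit assumption $\beta_i>0$ for every user $i$, needed to guarantee positivity of the diagonal entries $a_{ii}=\beta_i$ and hence of the column sums; this is built into the setup of Section \ref{Sec: Model1}, where $\beta_i$ appears as a Poisson rate. No alternative path (e.g.\ direct Perron--Frobenius arguments on $A$) is required.
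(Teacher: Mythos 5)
Your proposal is correct and follows essentially the same route as the paper: identify $A$ from \eqref{eq: A one network} as skew-(sub)stochastic and apply Theorem \ref{thm: skew stoch}\eqref{Skew3: pos} for the critical case and \eqref{Skew6: pos} (with irreducibility supplied by Proposition \ref{prop: char irr social}) for the subcritical case. The only cosmetic difference is that the paper secures the nonvanishing column sums via Lemma \ref{lem: B2}, whereas you obtain them directly from $\lambda_{ii}=1$ and $\beta_i>0$, which is equally valid.
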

Note that this is a full characterisation of the range of $\rho(A)$ for irreducible $A$, as innovation rates never equals one.
\begin{proof}
By Lemma \ref{lem: B2}, for all $1\leq i\leq n$ we have $\sum_{k\neq i}\lambda_{ki}\neq 0$. Therefore, if $\mu_i=0$ for all $1\leq i\leq n$, then $\rho(A)=1$ by Theorem \ref{thm: skew stoch} \eqref{Skew3: pos}.

Furthermore, by Proposition \ref{prop: char irr social}, the matrix $A$ is irreducible. Therefore,  if there exists $i\in\{1,\dots,n\}$ such that $\mu_i\in (0,1)$, then $\rho(A)<1$ by Theorem \ref{thm: skew stoch} \eqref{Skew6: pos}.
\end{proof}

A combination of Proposition \ref{prop: char irr social}, Lemma \ref{lem: B2} and Lemma \ref{prop: eigen 1 or not} gives the following:
\begin{theorem}\label{th: crit irr etc} 
Suppose $\Lambda=(\lambda_{ij})_{1\leq i,j\leq n}$ is irreducible. Then $\mu(t)$ is subcritical if and only if $\mu^i>0$ for some $1\leq i\leq n$, and critical, if and only if $\mu^i=0$ for all $1\leq i\leq n$. It is never supercritical.
\end{theorem}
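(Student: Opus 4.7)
The plan is to combine the three preceding results—Proposition \ref{prop: char irr social}, Lemma \ref{lem: B2}, and Lemma \ref{prop: eigen 1 or not}—by checking carefully against Definition \ref{def: crit} for critical behaviour of an irreducible age-dependent branching process. Since $\Lambda$ is assumed irreducible, the first order of business is to secure that $\mu(t)$ itself is irreducible, so that Definition \ref{def: crit} applies in the first place and the Perron eigenvectors $u,v$ with $v^\top u=1$ exist.

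First I would invoke Proposition \ref{prop: char irr social} to pass from irreducibility of $\Lambda$ to irreducibility of both $\mu(t)$ and the first moment matrix $A$. Next, since $A$ is skew-sub-stochastic by \eqref{eq: A one network}, Theorem \ref{thm: skew stoch}\eqref{Skew2: pos} gives $\rho(A)\leq 1$, which instantly rules out supercriticality. It remains to determine, as a function of the innovation parameters $\mu_i$, whether the process sits at $\rho(A)=1$ or $\rho(A)<1$, and in the former case whether the second-moment non-degeneracy condition of Definition \ref{def: crit} holds.

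For the subcritical direction, if some $\mu_i\in(0,1)$, then Lemma \ref{prop: eigen 1 or not}\eqref{subcrit} delivers $\rho(A)<1$ directly, which is exactly subcriticality in the sense of Definition \ref{def: crit}. Conversely, if $\mu_i=0$ for every $i$, then Lemma \ref{prop: eigen 1 or not}\eqref{crit} gives $\rho(A)=1$, and to promote this to criticality I would invoke Lemma \ref{lem: B2}, which, under the irreducibility of $\mu(t)$ just established, yields $\sum_{i,j,k} v^i B^i_{jk} u^j u^k>0$; together these are precisely the two defining conditions for criticality. Since these two parameter regimes are exhaustive and mutually exclusive, they give the biconditionals in the statement.

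There is no real obstacle here: the work was done in the preceding lemmas, and the proof reduces to bookkeeping the definition of criticality against the dichotomy $\{\exists i:\mu_i>0\}$ versus $\{\forall i:\mu_i=0\}$. The only point that requires a moment of care is confirming that Definition \ref{def: crit} is legitimately applicable—i.e.~that $A$ is irreducible and non-negative so that the Perron eigenvectors $u,v$ used in the second-moment condition actually exist as strictly positive vectors—which is exactly what Proposition \ref{prop: char irr social} provides and what Lemma \ref{lem: B2} already exploits in its proof.
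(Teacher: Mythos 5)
Your proposal is correct and follows exactly the route the paper intends: the paper presents Theorem \ref{th: crit irr etc} as an immediate combination of Proposition \ref{prop: char irr social}, Lemma \ref{lem: B2} and Lemma \ref{prop: eigen 1 or not}, checked against Definition \ref{def: crit}, which is precisely your argument. Your write-up is in fact more explicit than the paper's one-line justification, but it is the same proof.
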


We close this section by reflecting about the parameterisation of the model.

\begin{remark}\label{rem: parameterization}
Model 1 is maximally parameterized in that for any irreducible, non-negative matrix $A$ with spectral radius $\rho(A)=1$, there exists a (not unique) social network
having first moment matrix $A$, and where the entries of $A$ are proportional to the acceptance rates $\lambda_{ij}$, and the activites $\beta_i$ of the users are proportional to entry $s^i$ of the left Perron eigenvector of $A$.
More precisely, let $0\neq s\geq 0$ such that $A^\top s=s$, then it is well-known that $s^i>0$ for all $1\leq i\leq n$. As in the proof of Theorem \ref{th: arnold}, we can write
\[
a_{ij}=\frac{a_{ij}s^i}{\sum_{k}a_{ki}s^k}.
\]
Let $a_{\infty}=\max_{1\leq i,j\leq n} A_{ij}$. Then $\lambda_{ij}:=a_{ij}/a_{\infty}$ may be interpreted as acceptance probabilities, and, together with the intensities $\beta_i:=a_{\infty}\cdot s^i$, we have
\[
a_{ij}=\frac{\lambda_{ij}\beta_i}{\sum_{k}\lambda_{ki}\beta_k},
\]
which is the first moment matrix of a critical multiple-type branching process as Model 1, with zero innovation (cf. eq. \eqref{eq: A one network} and Theorem \ref{th: crit irr etc}. 
Note that if $\lambda_{ii}<1$, the model that restricts the amount of sharing of users $i$ accordingly is a simple generalization of Model 1, where $\lambda_{ii}=1$ for $1\leq i\leq n$.)
\end{remark}
\subsubsection{Model 1b: Multi-layer Version}\label{sec: multi}
The theory developed in this section can easily be extended to the multi-layer network of \cite{d2019spreading}, where sharing of information
between several distinct social media platforms is allowed. In the branching approximation of this multi-plex network model, the matrix of first moments \cite[Chapter 3, Equation (14)]{d2019spreading}
\begin{equation}\label{eq: multapp}
\frac{(c_{ij}+\lambda_{ij})(1-\mu_i) \beta_i}{\mu_i\beta_i+\sum_k \lambda_{ki}\beta_k+\sum_k c_{ki}\beta_k+\sum_k [\lambda_{ki}(\sum_l c_{lk}\beta_l)]}
\end{equation}
is of skew-(sub)stochastic form. Thus extinction and criticality can be characterized in the same fashion as above (Theorem \ref{prop: subskew social} and  Theorem \ref{th: crit irr etc}). There is, however, one slight difference: Due to the fact that a user can have multiple accounts $i,j$, and thus share with probability $c_{ij}$ some information they see on a different layer (one social media platform) with their follower on another layer (another social media app), the system is, in general, sub-critical, and not critical, even when all innovation rates vanish (this issue comes from the term $\sum_k [\lambda_{ki}(\sum_l c_{lk}\beta_l)]$ in \eqref{eq: multapp}). This is in contrast to the single layer Model 1 (The second part of Theorem \ref{th: crit irr etc} states that criticality holds precisely when all innovation rates are zero.)

It should be noted that the sub-criticality of a one-layer network has only been proved in \cite{d2019spreading} for sufficiently small innovation probability, and that our proofs concerning the spectrum of the first moment matrix
does not require irreducibility. In their multi-layer version, \cite{d2019spreading} require for the proof of sub-criticality the assumption not only of irreducibility (to identify uniquely left and right eigenvectors), asymptotically small innovation rates, and the assumption of a single, dominant layer. We have not used any of these assumptions to prove, along the lines of the previous section, $\rho(A)\leq 1$, and extinction in finite time.

\subsection{Model 2: Random Network}\label{Sec: Model2}

In this section we give a new description of information spreading through the random network model \cite{gleeson2016effects}, allowing unrestricted sharing
as in \cite{gleeson2014competition}.\footnote{For the sake of simplicity, we do not model memory effects here, which can be added without
any difficulty.} For the spreading of a special meme
through a social media network, we identify not users, but classes of users, whose instances are nodes in a directed random network, comprised of a possibly large, but finite, number $n\geq 2$ of types of nodes, each node $i$ with a fixed in-degree $z\geq 1$.\footnote{The original setup allows also an in-degree distribution, but due to the branching process approximation,
only its mean is of relevance. Interestingly, we prove in Section \ref{sec: tuning} below that the model in \cite{gleeson2016effects} requires a modulation of
meme arrival rates to have the criticality properties claimed in \cite{gleeson2016effects}.} The out-degree distribution $p_k>0$ ($1\leq k\leq n$) in this random network satisfies\footnote{Several parameter assumptions, in particular strict positivity of $p_k$, could be relaxed, at the expense of
sacrificing irreducibility of the branching process defined below.}, by assumption, $\sum_{k=1}^n k p_k=z$. We introduce the PGF 
\begin{equation}\label{eq: gp}
g(z):=\sum_{k=1}^n p_k z^k,\quad 0\leq z \leq 1.
\end{equation} 

Modelling the information spread in this network, we will insist on a branching mechanism, very similar to Model 1, and therefore we keep the description below brief. We consider a specific meme, which is identified with a Type $k$ particle, once it arrives in the stream of a Type $k$ user (since this is a random network, there
are possibly more than one Type $k$ users, but all share the same parameters set out below). Users of Type $k$ ($1\leq k\leq n$) get active at an exponentially distributed random time $\tau_k$ with parameter $\beta_k>0$. Furthermore, the arrival of information at the account of a user of Type $k$ is exogenous, in that memes are assumed to arrive according to a Poisson process with rate
\[
r_k=\mu_k \beta_k+ z\overline \beta-(1-\delta)\beta_k,
\]
where $\mu_k\in [0,1)$ is the innovation probability of user $k$, $\delta>0$ is a real parameter\footnote{This parameter will be decided later so to make the model critical in the case of zero innovation}, and
\[
\overline \beta:=\frac{\sum_{k} kp_{k} \beta_{k}\lambda_{k}}{\sum_k k  p_k}=\frac{1}{z}\sum_{k} kp_{k} \beta_{k}\lambda_{k},
\]
where $\lambda_i\in (0,1]$ denotes the probability of user $i$ accepting a meme into her stream. ($1\leq i\leq n$).

Let $\tau_k^o$ be the occupancy time of the meme in user $k$-th' account (the time between the arrival of other memes). Once a meme is accepted into the stream of a Type $k$ user, she becomes active at $\tau_k$, and the following mutually exclusive cases
occur: 
\begin{itemize}
\item If $\tau_k<\tau_k^o$, then either
\begin{itemize}
\item User $k$ innovates. This happens with probability $\mu_k$. Thus the particle dies without descendants.
\item User $k$ shares (with probability $1-\mu_i$). This means, that the meme is shared with $k$ random users. Besides that, the meme may either be deleted on her account ($\varepsilon=0$), or be shared once $(\varepsilon=1$). 
\end{itemize}
\item If $\tau_k>\tau_k^o$, the particle dies without descendants.
\end{itemize}

Let $F^i(s,t)$ be the PGF of a Type $i$ user (with $i$ followers), and let $h^k(s;u)$, where $s=(s^{i},1\leq i\leq n)$, be the PGF of the particle distribution of a Type $i$ user conditional on the death having occured by time $\tau^{i}=u$. We have
\[
h^{i}(s,u)=\kappa_0^{i}(u)+\kappa^{i}(u) (s^{i})^{\varepsilon}\left(\sum_{j=0}^k p_{i}\left( (1-\lambda_{j})+\lambda_{i} s^{i}\right)\right)^i,\quad 1\leq i\leq n,
\]
where $\varepsilon\in\{0,1\}$, and 
\[
\kappa^{i}(u):=(1-\mu_{i})e^{-r_{ij} u}, \quad\text{and}\quad \kappa^{i}_0(u)=\mu_{i}+(1-\mu_{i})(1-e^{-r_{i} u}).
\]
The unconditional particle distribution at death of particle $i$ is given by
\begin{equation}\label{eq: uncod Model2}
h^{i}(s)=\frac{r_i+\mu_i\beta_i}{\beta_i+r_i}+\frac{(1-\mu_{i})\beta_{i}}{\beta_{i}+r_{i}}(s^{i})^\varepsilon\left(\sum_{k} p_{k}\left( (1-\lambda_{k})+\lambda_{k}s^{k}\right)\right)^i,\quad 1\leq i\leq n,
\end{equation}
so that the matrix of first moments $A$ has the entries 
\begin{equation}\label{eq: 1}
A^{i}_{j}=\frac{(1-\mu_{i}) \beta_{i}}{\mu_i \beta_i+ z\overline \beta+\delta \beta_i}\times ( i p_{j}\lambda_{i}+\varepsilon \delta_{ij}),\quad 1\leq i, j\leq n.
\end{equation}
Moreover, the matrix of second moments is given by
\[
B^i_{kl}=\frac{(1-\mu_{i}) \beta_{i}}{\mu_i \beta_i+ z\overline \beta+\delta \beta_i}\left(i(i-1)(p_k\lambda_k)^2 +\varepsilon \delta_{ik}p_k\lambda_k+\varepsilon \delta_{il}p_l\lambda_l+\varepsilon(\varepsilon-1)\delta_{ik}\delta_{il}\right).
\]
\begin{lemma}\label{lem: non-deg}
$A$ (and thus the branching process) is irreducible, and the second moment is non-vanishing. Moreoever, if $\rho(A)=1$, then $\mu(t)$ is critical.
\end{lemma}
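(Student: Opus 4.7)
The plan is to verify each of the three assertions directly from the explicit formulas for $A$ and $B^i$ displayed just above the statement; the arguments reduce to strict-positivity checks exploiting the standing assumptions $p_k>0$, $\lambda_i>0$, $\beta_i>0$, $\mu_i<1$, and $n\geq 2$.

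First, I would observe that every entry $A^i_j$ is strictly positive: the prefactor $(1-\mu_i)\beta_i/(\mu_i\beta_i+z\overline\beta+\delta\beta_i)$ is positive since $\mu_i<1$ and $\beta_i>0$, while the bracketed factor $i\,p_j\lambda_i+\varepsilon\delta_{ij}$ is positive because $i\geq 1$, $p_j>0$ and $\lambda_i>0$. A strictly positive matrix is automatically irreducible (its directed graph is complete), and by Theorem~\ref{th: irr char disc time} the branching process $\mu(t)$ inherits irreducibility from $A$.

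Next, to verify that the second moments do not vanish, I would exhibit a strictly positive entry of some $B^i$. Because $n\geq 2$, the type $i=2$ is available, and for any $k,l$ the leading term $i(i-1)(p_k\lambda_k)^2=2(p_k\lambda_k)^2$ is strictly positive; the remaining summands of $B^i_{kl}$ are non-negative, since $\varepsilon(\varepsilon-1)=0$ for $\varepsilon\in\{0,1\}$ and the Kronecker terms are non-negative. Hence $B^2_{kl}>0$ for every $k,l$.

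Finally, under the hypothesis $\rho(A)=1$, irreducibility of $A$ together with Perron--Frobenius provides strictly positive left and right eigenvectors $v,u$ associated with $\rho(A)$. Fixing the index $i=2$ from the previous step yields, for any $j,k$,
\[
\sum_{i,j,k} v^i B^i_{jk} u^j u^k \;\geq\; v^2\, B^2_{jk}\, u^j u^k \;>\; 0,
\]
which together with $\rho(A)=1$ is precisely Definition~\ref{def: crit} of criticality. There is no genuine obstacle in this proof: the positivity of the model parameters makes $A$ strictly positive outright, and the assumption $n\geq 2$ is exactly what is needed to make the quadratic factor $i(i-1)$ in the expression for $B^i_{kl}$ nonzero, so the only care required is to extract a single positive entry of the second-moment tensor and invoke strict positivity of the Perron eigenvectors.
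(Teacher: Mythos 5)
Your proof is correct and follows essentially the same route as the paper: strict positivity of $A$ (from $p_k>0$, $\lambda_i>0$, $\beta_i>0$, $\mu_i<1$) gives irreducibility, which transfers to $\mu(t)$ via Theorem~\ref{th: irr char disc time}; the $n\geq 2$ assumption yields $B^2_{kl}>0$ via the $i(i-1)(p_k\lambda_k)^2$ term; and strict positivity of the Perron eigenvectors then gives $\sum_{i,j,k}v^iB^i_{jk}u^ju^k>0$. Your version is slightly more explicit than the paper's in checking that the remaining summands of $B^i_{kl}$ are non-negative, but the argument is the same.
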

\begin{proof}
Due to the assumption $p_k>0$ for $1\leq k\leq n$, the matrix $A$ is strictly positive, and therefore irreducible. By Theorem \ref{th: irr char disc time}, the process $\mu(t)$ is also irreducible.

Since $n\geq 2$ and $\mu_i<1$, $\beta_i,\lambda_i,p_i>0$, we have 
\[
B^2_{kl}\geq \frac{(1-\mu_{i}) \beta_{i}}{\mu_i \beta_i+ z\overline \beta+\delta \beta_i}\left( 2(p_k\lambda_k)^2\right)>0,
\]
whence $B\neq 0$. Furthermore, if $A$ is irreducible, let $u$ (respectively $v$) being the strictly positive eigenvectors associated with $\rho(A)$, then $\sum_{i,k,l}v_i B^i_{kl}u^k u^l>0$. Thus, by Definition \ref{def: crit}, $\mu(t)$ is critical.
\end{proof}

\subsubsection{Extinction Probability with and without restricted sharing}
In this section we prove that in Model 1, extinction occurs with probability one, and characterize criticality. Note that for $\varepsilon=0$, the proofs are simpler and more instructive, and thus
we separate this case (Proposition \ref{prop?} from $\varepsilon\geq 1$ (Theorem \ref{th: 1x}). The following is elementary:
\begin{lemma}\label{eq: baby}
Let $v,w$ be non-zero $n$-vectors. Then the spectrum of the matrix $A=v w^\top$ is $\sigma(A)=\{0,v^\top w\}$, with $v$ (respectively, $w$) being a right (respectively, left) eigenvector associated with $\rho(A)=vw^\top$, and $w$.
\end{lemma}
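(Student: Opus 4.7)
The statement is a routine rank-one fact, so I would present it as follows.

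The plan is to exploit the rank-one structure of $A=vw^\top$ directly. First I would verify the two eigenvector claims by explicit computation: using associativity of matrix multiplication,
\begin{equation}
A v = (vw^\top) v = v (w^\top v) = (w^\top v)\, v,
\end{equation}
so $v$ is a right eigenvector with eigenvalue $\lambda^* := w^\top v$, and symmetrically
\begin{equation}
w^\top A = w^\top (vw^\top) = (w^\top v)\, w^\top,
\end{equation}
so $w$ is a left eigenvector with the same eigenvalue.

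Next I would account for the remaining spectrum. Since $v\neq 0$, the image of $A$ is contained in the one-dimensional span of $v$, so $\mathrm{rank}(A)\leq 1$; and since $w\neq 0$, we can pick a vector $x$ with $w^\top x\neq 0$, giving $Ax = (w^\top x) v\neq 0$, so in fact $\mathrm{rank}(A)=1$. By the rank--nullity theorem, $\dim\ker(A)=n-1$, so $0$ is an eigenvalue of geometric multiplicity $n-1$. Combined with the simple eigenvalue $\lambda^*$ from the first step, the spectrum is exhausted: $\sigma(A)=\{0,w^\top v\}$ (with $0$ of multiplicity $n-1$ and $\lambda^*$ of multiplicity $1$).

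Finally I would deduce the spectral radius. The only nonzero eigenvalue candidate is $\lambda^*=w^\top v$, so $\rho(A)=|w^\top v|$; and in the context the lemma is applied (non-negative first-moment matrices with strictly positive Perron vectors), $\lambda^* = w^\top v > 0$, matching the stated $\rho(A)=w^\top v$. There is no real obstacle here; the only thing to flag is the evident typo in the statement (the spectrum is $\{0,w^\top v\}$, with $v$ and $w$ as right and left Perron eigenvectors, respectively).
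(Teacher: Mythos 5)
Your proof is correct and complete; the paper itself offers no proof of this lemma (it is introduced with ``The following is elementary'' and none is given), so there is nothing to compare against. The rank-one computation $Av=(w^\top v)v$, $w^\top A=(w^\top v)w^\top$ together with the rank--nullity count is the standard argument and settles the claim, including the degenerate case $w^\top v=0$ where $\{0,v^\top w\}$ collapses to $\{0\}$. You are also right to flag that the statement's ``$\rho(A)=vw^\top$'' should read $v^\top w$ (equivalently $w^\top v$), that the trailing ``, and $w$'' is a leftover fragment, and that $\rho(A)=v^\top w$ rather than $|v^\top w|$ relies on nonnegativity, which holds in the paper's application to first-moment matrices.
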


\begin{proposition}\label{prop?}
Suppose $\varepsilon=0$, and for all $1\leq k\leq n$
\[
r_k=\mu_k \beta_k+ z\overline \beta-\beta_k,
\]
(that is, $\delta=\varepsilon=0$), then
\begin{enumerate}
\item  \label{part 1: eps0} If $\mu_i=0$ for all $1\leq i\leq n$, then $\rho(A)=1$  (and thus the process is critical.)
\item  \label{part 2: eps1} If $\mu_i>0$ for at least one $i\in\{1,\dots,n\}$, then $\rho(A)<1$  (and thus the process is subcritical.)
\end{enumerate}
\end{proposition}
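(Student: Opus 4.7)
The plan is to exploit that when $\varepsilon=\delta=0$, the first moment matrix from \eqref{eq: 1} has rank one. Writing
$$A^i_j \;=\; v_i \, w_j, \qquad v_i := \frac{(1-\mu_i)\,i\,\lambda_i\,\beta_i}{\mu_i\beta_i + z\overline\beta}, \qquad w_j := p_j,$$
we obtain $A = vw^\top$. By Lemma \ref{eq: baby}, the spectral radius equals the scalar $w^\top v$:
$$\rho(A) \;=\; \sum_{i=1}^n \frac{(1-\mu_i)\, i\, p_i\, \lambda_i\, \beta_i}{\mu_i\beta_i + z\overline\beta}.$$

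For part \eqref{part 1: eps0}, set $\mu_i=0$ for all $i$. The denominator collapses to $z\overline\beta$, and the numerator sums to $\sum_i i\,p_i\,\lambda_i\,\beta_i$, which is exactly $z\overline\beta$ by the very definition of $\overline\beta$. Hence $\rho(A)=1$, and criticality follows from Lemma \ref{lem: non-deg}.

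For part \eqref{part 2: eps1}, I would compare each term of the sum with its $\mu_i=0$ counterpart:
$$\frac{(1-\mu_i)\,\beta_i}{\mu_i\beta_i + z\overline\beta} \;\leq\; \frac{\beta_i}{z\overline\beta},$$
with strict inequality precisely when $\mu_i>0$, since then $1-\mu_i<1$ and $\mu_i\beta_i + z\overline\beta > z\overline\beta$ simultaneously. The standing assumptions $p_i>0$, $\lambda_i>0$, $\beta_i>0$, $i\geq 1$ guarantee that each summand carries strictly positive weight $i\,p_i\,\lambda_i$, so a single index with $\mu_i>0$ suffices to pull the total strictly below the critical value $1$, giving $\rho(A)<1$ and subcriticality.

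I do not anticipate a real obstacle here: the only non-routine step is recognizing the rank-one factorization of $A$, which bypasses the skew-sub-stochastic machinery entirely and reduces the computation of $\rho(A)$ to an elementary inner product. Everything else is direct arithmetic combined with the already-established Lemma \ref{eq: baby} and Lemma \ref{lem: non-deg}.
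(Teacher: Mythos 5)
Your proof is correct and follows essentially the same route as the paper: both exploit the rank-one factorization $A=vw^\top$ when $\varepsilon=\delta=0$, invoke Lemma \ref{eq: baby} to reduce $\rho(A)$ to the inner product $v^\top w$, and then use the definition of $\overline\beta$ to evaluate it. Your term-by-term comparison in part \eqref{part 2: eps1} is in fact slightly cleaner than the paper's displayed bound (which contains a typographical slip, ending in ``$<0$'' where ``$<1$'' is meant).
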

\begin{proof}
The process is irreducible (Lemma \ref{lem: non-deg}). Obviously, matrix $A$ is of rank one. Therefore, we may use Lemma \ref{eq: baby} to compute the spectral radius of $A$. First, setting all innovation probability equals zero, we thus get
\[
\rho(A)=\frac{\sum_{i}i\beta_ip_i\lambda_i}{z\overline \beta}=1,
\]
which proves Part \ref{part 1: eps0}. Similarly, if one $\mu_i>0$, we get 
\[
\rho(A)\leq 1-\frac{\mu_i \beta_i i p_i \lambda_i}{\mu_i \beta_i+z \overline \beta}<0,
\]
which, in conjunction with Lemma \ref{lem: non-deg}, proves Part \ref{part 2: eps1}.
\end{proof}
We thus have an independent proof of the claim of criticality of the model \cite{gleeson2016effects}, when the in-degree distribution is degenerate. (For the more general situation,
one needs to modify the arrival rate of memes, see Section \ref{sec: tuning} below.)

Next we study, when $\varepsilon=1$. As the model is well-defined even for any $\varepsilon\in\mathbb N_0$, we state a more general version.
\begin{theorem}\label{th: 1x}
If $\delta=\varepsilon\in\{1,2,\dots\}$, that is,
\[f
r_k=\mu_k \beta_k+ z\overline \beta-(1-\varepsilon)\beta_k,\quad 1\leq k\leq n,
\]
the following hold:
\begin{enumerate}
\item  \label{part 11} If $\mu_i=0$ for all $1\leq i\leq n$, then $\rho(A)=1$ (and thus the process is critical.)
\item  \label{part 21} If $\mu_i>0$ for at least one $i\in\{1,\dots,n\}$, then $\rho(A)<1$ (and thus  the process is subcritical.)
\end{enumerate}
\end{theorem}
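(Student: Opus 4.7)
The plan is to identify an explicit candidate for the right Perron eigenvector of $A$ and then to apply Collatz-Wielandt, exploiting the irreducibility established in Lemma \ref{lem: non-deg}. Observe that the matrix $A$ has the rank-one-plus-diagonal structure
\[
A = D\bigl(ab^\top + \varepsilon I\bigr),\qquad D=\mathrm{diag}(c_i),\ a_i=i\lambda_i,\ b_j=p_j,
\]
where $c_i=\frac{(1-\mu_i)\beta_i}{\mu_i\beta_i+z\bar\beta+\varepsilon\beta_i}$. For any vector $v$ this gives $(Av)_i = c_i\bigl(i\lambda_i(b^\top v)+\varepsilon v_i\bigr)$, so a natural candidate for the eigenvector is suggested by requiring $b^\top v=1$: set
\[
v_i := \frac{i\,p_i\lambda_i\beta_i}{z\bar\beta}\cdot\frac{1}{p_i}=\frac{i\lambda_i\beta_i}{z\bar\beta}\quad\text{(equivalently, } v_i=\tfrac{\beta_i i\lambda_i}{z\bar\beta}).
\]
The definition of $\bar\beta$ gives immediately $b^\top v=\sum_j p_j v_j = \tfrac{1}{z\bar\beta}\sum_j j p_j\beta_j\lambda_j = 1$, so $(Av)_i = c_i\bigl(i\lambda_i+\varepsilon v_i\bigr)$.

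First I would plug in and simplify: writing $(Av)_i\le v_i$, clearing the positive denominator $\mu_i\beta_i+z\bar\beta+\varepsilon\beta_i$, and cancelling the common factor $\varepsilon\beta_i v_i$ that appears on both sides reduces the inequality to
\[
(1-\mu_i)\beta_i i\lambda_i\ \le\ v_i\bigl(z\bar\beta+\mu_i\beta_i(1+\varepsilon)\bigr),
\]
and after substituting $v_i=\beta_i i\lambda_i/(z\bar\beta)$ this collapses to
\[
-\mu_i\,z\bar\beta\ \le\ \mu_i\beta_i(1+\varepsilon),
\]
which holds for every $i$, with equality precisely when $\mu_i=0$.

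Consequently $Av\le v$ with $v>0$. In Part \ref{part 11}, all $\mu_i=0$ yields $Av=v$ exactly, so $1$ is an eigenvalue of $A$ with a strictly positive eigenvector; by Perron-Frobenius applied to the irreducible matrix $A$ (Lemma \ref{lem: non-deg}), $\rho(A)=1$, and criticality follows from the second conclusion of Lemma \ref{lem: non-deg}. In Part \ref{part 21}, the inequality $-\mu_i z\bar\beta < \mu_i\beta_i(1+\varepsilon)$ is strict at the index $i$ with $\mu_i>0$, so $Av\le v$ with strict inequality in at least one coordinate; the Collatz-Wielandt characterisation for irreducible non-negative matrices (see, e.g., \cite[Chapter 1, Theorem (3.2)]{Plemmons}) then forces $\rho(A)<1$, which together with Lemma \ref{lem: non-deg} yields subcriticality.

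The only real obstacle is guessing the eigenvector, and the structural decomposition $A=Dab^\top+\varepsilon D$ together with the identity $\sum_j jp_j\beta_j\lambda_j=z\bar\beta$ makes it essentially forced: everything else is bookkeeping, and the argument treats every integer $\varepsilon\ge 1$ uniformly, with no case split on whether the rank-one term or the diagonal term dominates.
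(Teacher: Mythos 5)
Your proof is correct, and it takes a genuinely different route from the paper's. The paper argues by contradiction: it scales $A-\rho I$ by a diagonal matrix and shows the result is (strictly) diagonally dominant, hence nonsingular, to get $\rho(A)\le 1$, and then separately exhibits singularity of the scaled $A-I$ when all $\mu_i=0$. You instead write $A=D(ab^\top+\varepsilon I)$ and produce the explicit positive vector $v_i=i\lambda_i\beta_i/(z\overline\beta)$, for which the identity $\sum_j jp_j\lambda_j\beta_j=z\overline\beta$ gives $b^\top v=1$ and hence $Av\le v$ componentwise, with equality in coordinate $i$ exactly when $\mu_i=0$; Perron--Frobenius for the irreducible (indeed strictly positive) matrix $A$ from Lemma \ref{lem: non-deg} then yields $\rho(A)=1$ in Part \ref{part 11} and, via the subinvariance theorem ($Av\le v$, $Av\neq v$, $v>0$ forces $\rho(A)<1$; one line with the left Perron vector $w$: $\rho\,w^\top v=w^\top Av<w^\top v$), $\rho(A)<1$ in Part \ref{part 21}. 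I verified the algebra: clearing the denominator and subtracting $(1-\mu_i)\varepsilon\beta_i v_i$ from both sides (your phrase ``cancelling the common factor'' is slightly loose, since the two $\varepsilon\beta_i v_i$ terms carry different prefactors, but the displayed reduced inequality is exactly right) collapses to $-\mu_i z\overline\beta\le\mu_i\beta_i(1+\varepsilon)$, which is what you claim. Your approach buys several things: both parts follow from a single computation; the argument is uniform in $\varepsilon\ge 0$ and so also re-derives Proposition \ref{prop?}; and having the eigenvector in hand makes the singularity of $A-I$ in the critical case completely explicit, whereas the paper's row-sum/diagonal-dominance bookkeeping (with its $(r_{ij})_{ji}$ transposition) is harder to audit. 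The only cosmetic quibble is the citation: the Collatz--Wielandt/subinvariance statement you need is not literally \cite[Chapter 1, Theorem (3.2)]{Plemmons} (which the paper uses for ``$\rho$ is an eigenvalue''), but it is standard and the one-line proof above closes that gap.
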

\begin{proof}
 First we show $\rho(A)\leq 1$ for all parameter choices. Assume, for a contradiction, that $\rho(A)>1$. We can write the matrix more generally as having entries
\[
\frac{\gamma_i\beta_i}{G_i+z\overline \beta+\delta \beta_i}(\varepsilon \delta_{ij}+ip_j\lambda_i),\quad 1\leq i,j\leq n,
\]
where $0<\gamma_i\leq 1$, and $G_i\geq 0$. By multiplying the matrix $A$ from the left by $\text{diag}(u)$, where  $u_i=(G_i+z\overline\beta+\delta\beta_i)^{-1}$, $1\leq i\leq n)$,
we see that $A-\rho I$ is singular if and only the matrix $R=(r_{ij})_{ji}$ is singular, which is defined element by element as
\[
\gamma_i\beta_i(\varepsilon \delta_{ij}+ip_j\lambda_i)-(G_i+z\overline \beta+\delta \beta_i)\rho\delta_{ij}.
\]
However, for any $1\leq i\leq n$, its diagonal element satisfies for $\delta=\varepsilon$
\[
\vert dr_{ii}\vert=\rho(G_i+z\overline \beta+\delta \beta_i)-\gamma_i\beta_i(\varepsilon +ip_i\lambda_i)>z\overline \beta-i \beta_ip_i\lambda_i=\sum_{j\neq i}j\beta_jp_j\lambda_j=\sum_{j\neq i} \vert r_{ij}\vert,
\]
whence the matrix $R$ is diagonally dominant and therefore non-singular, a contradiction.

Proof of \eqref{part 11}: We show that $1$ is an eigenvalue, when $\mu_i=0$ for $1\leq i\leq n$. By multiplying the matrix $A$ from the left by $\text{diag}(u)$, where  $u_i=(z\overline\beta+\delta\beta_i)^{-1}$, $1\leq i\leq n)$, we see that $A$ has eigenvalue one if and only if the matrix $R=(r_{ij})_{ji}$ is singular, where
\[
r_{ij}=i\beta_i \lambda_ip_j+\varepsilon\beta_i \delta_{ij}-\delta_{ij}(z\overline\beta+\delta\beta_i).
\]
If $\delta=\varepsilon$, then $\sum_j r_{ij}=0$. Hence the sum of all rows vanishes, which indeed makes $R$ singular. Since, in addition $A$ is non-negative  $\rho(A)\leq 1$, it follows that $\rho(A)=1$. The process
$\mu(t)$ is critical by Lemma \ref{lem: non-deg}.

Part \eqref{part 21} can be proved similarly, assuming, for a contradiction, $\rho(A)>1$. Note that for an irreducible matrix it suffices
to show weak diagonal dominance for all rows or columns, and strict diagonal dominance for a single row or column.
\end{proof}
The construction of meme arrival rates $r_k$ at accounts of users of Type $k$ (Proposition \ref{prop?} and \ref{th: 1x}) is intuitive: It needs to be increased by $(r-1)*\beta_k$, whenever the particle of type $k$ produces for sure
$r$ extra particles of the same type in the next generation.

The spectrum of first moment matrices $A$ is depicted in Figure \ref{figure2}, where we use numerical simulation to create a large amount of model parameters. In general, the spectrum  of $A$ is of size $n$: There are $n$ distinct eigenvalues, and all are real. This is in stark contrast to the situation of Model 1, where the first moments are skew-stochastic matrices, which
are are, in the irreducible case, similar to stochastic matrices (and thus have a complex spectrum, in general).

\begin{figure}
\includegraphics[width=0.8\textwidth]{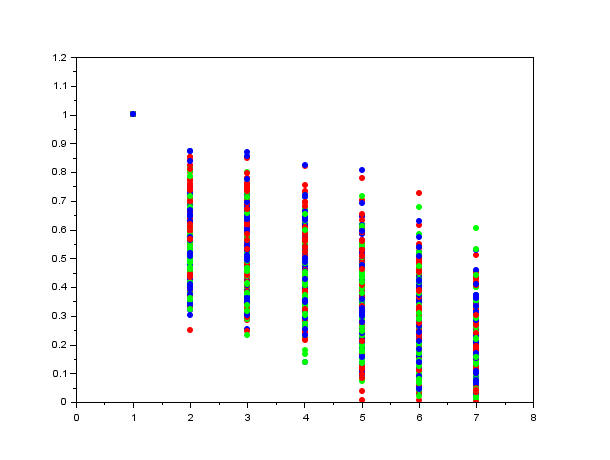}
\caption{The spectrum of 100 first moment matrices, using uniformly distributed parameters $\beta_k,\lambda_k, p_k$, $1\leq k\leq n$, while $\varepsilon=\delta$ is sampled from the Poisson distribution. For zero of innovation, the spectrum of each of these
is typically of size $n$, entirely real, with spectral radius equal one (Theorem \ref{th: 1x}).}
\label{figure2}
\end{figure}

\section{Moment Closure: Criticality and Popularity}\label{sec: epsilon 0 representation}
Probability generating functions are the main tool in the field of branching processes with discrete state space. The analysis gets very difficult in the case of multi-type processes, as it involves analytic functions in several variables. The main objective in the papers \cite{gleeson2014competition, gleeson2016effects, d2019spreading}
is to model the critical behaviour, and the popularity of memes (the total progeny of postings of a specified meme) in social media platforms. Unfortunately, there are few results available about the total progeny in multi-type processes. For example, for critical and subcritical systems, it is known \cite[Section 3]{good1960generalizations} that the total number $\nu^i$ of particles at the end, where $\mu(0)=\mu^i(0)=(\delta_{ij})_{1\leq j\leq n}$
where present at time $t=0$, has PGF
\[
\sum_\alpha q^i_\alpha s^\alpha
\]
where the coefficient $q_\alpha=\mathbb P[\nu^i=\alpha=(\alpha_1,\dots,\alpha_n)]$ is the coefficient of the term
\[
(s^1)^{\alpha_1}\dots (s^i)^{\alpha_i-1}\dots (s^n)^{\alpha_n}
\]
in the series expansion of
\[
(h^1)^{\alpha_1}\dots (h^n)^{\alpha_n}\det(Q),
\]
where $Q=(q_{ij})_{ij}$ is given by
\[
q_{ij}:=\delta_{ij}-\frac{s^i}{h^i(s)}\frac{\partial h^i}{\partial s^j}.
\]
But, for non-trivial models, the computation of these moments is not feasible. In this section we shall develop exact single-type representations of multi-type branching processes, so to obtain rigorous characterizations of criticality (Section \ref{sec: 41} and \ref{sec: tuning}) as well as reliable information concerning the tails  of
the popularity distribution (Section \ref{sec: 43}).
\subsection{Alternative Proof of Proposition \ref{prop?}}\label{sec: 41}
In the special case $\varepsilon=\delta=0$ of Model 2 we recover essentially the model of \cite{gleeson2016effects}, albeit with deterministic in-degree distribution (as opposed to general in-degree distribution with average in-degree $z$ in  \cite{gleeson2016effects}), however with the more flexible type-dependent acceptance rates $\lambda_k\in (0,1]$ (as opposed to constant $\lambda$ in  \cite{gleeson2016effects}). Note that we have setup our model using age-dependent branching processes, whereas the derivation of the governing
equations (e.g., popularity, or population) of \cite{gleeson2016effects} is more intuitive, and therefore reads different. 

We have seen that in the case $\varepsilon=\delta=0$ the criticality of the system is easy to derive, as the spectrum of the first moment matrix $A$ contains just $0$ and $\text{trace}(A)$. Now that the rank of $A$ is one, it is intuitive that the process is, essentially one-dimensional. We demonstrate this
for the choice $\lambda_k=1$, $1\leq k\leq n$:

Furthermore, we know by the proof of Theorem \ref{th: irr char disc timec} that, concerning extinction, the process is fully equivalent to the discrete version with unconditional particle distribution $h^i(s)$, $1\leq i\leq n$. Let $\widetilde F^i(t)$ denote the PGFs of the $i$th particle in the discrete equivalent.
Let
\[
\theta_i=\frac{(1-\mu_i)\beta_i}{\beta_i+r_i}, \quad \vartheta_i=1-\theta_i.
\]
and define the average $\widetilde F(t,s)=\sum_k p_k  F^k(t,s)$.  Then we have
\[
\widetilde F^i(t+1,s)=\vartheta_i+\theta_i \left(\sum_k p_k (1-\lambda_k+\lambda_k \widetilde F^k(t,s))\right)^i, \quad t=0,1,2,\dots,\quad 1\leq i\leq n,
\]
Averaging once more, we get
\[
\widetilde F(t+1,s)=h_0+f(\widetilde F(t,s)),\quad \widetilde F(t=0)=\widetilde s:=\sum_k p_k(1-\lambda_k+\lambda_k s^k),
\]
where
\[
h_0:=\sum_k p_k (1-\lambda_k+\lambda_k \vartheta_k)
\]
and the PGF $f(z)$ is defined by
\[
 f(z)=\sum_k p_k \theta_k ((1-\lambda_k)+\lambda_k z)^k.
\]
The PGF $\widetilde F$ can be interpreted as the PGF of a one-dimensional discrete branching process $m(t)$, $t=0,1,2,\dots$, with branching mechanism
\[
h_0+f(z).
\]
As 
\[
f'(0)=\lambda_1p_1\theta_1=\frac{\lambda_1p_1\beta_1 (1-\mu_1)}{\beta_1+r_1}=\frac{\beta_1 p_1(1-\mu_1)}{\mu_1\beta_1+z\overline \beta}\leq \frac{\lambda_1\beta_1p_1}{\sum_k, k\beta_k \lambda_k p_k}<1
\]
extinction occurs with probability $1$ if and only if
\[
f'(1)=\sum_k k p_k  \lambda_k\theta_k\leq 1.
\]
This condition is indeed fulfilled, as
\[
\sum_k k p_k \lambda_k \theta_k =\sum_k k p_k \lambda_k \frac{(1-\mu_k)\beta_k}{\mu_k\beta_k+z\overline \beta}\leq\sum_k k p_k \lambda_k\frac{\beta_k}{z\overline \beta}=1,
\]
by the definition of $\overline \beta$. Furthermore, the inequality is an equality, if and only if all users have zero innovation.(That is, $\mu_i=0$ for all $1\leq i\leq n$.) Therefore, we have
obtained an independent proof of Proposition \ref{prop?}. 

\subsection{Model 2 with flexible in-degree distribution}\label{sec: tuning}
In this section, we find that in the more general model of Model 2 as given by  \cite{gleeson2016effects}, with non-trivial in-degree distribution, the claimed criticality (sub-criticality) does not hold, in general. We rectify this problem in the end of the section by modulating the exogenous arrival rates of memes at each user's account.

For a type $(j,k)$ node with an in-degree of $j$, the arrival rate are defined as
\begin{equation}\label{eqnull}
r_j=\mu\beta_{jk}+j\overline \beta \lambda,
\end{equation}
where $\overline \beta=\sum_{j,k} \frac{k}{z} p_{jk} \beta_{jk}$. Memes therefore produce off-spring -- averaged over their lifetime -- according to the PGFs
\[
h_{jk}(s)=1-l_{jk}+l_{jk} (\sum_{lm} p_{lm} (1-\lambda+\lambda s^{lm})^k,
\]
where
\[
l_{jk}=\frac{\beta_{jk} (1-\mu)}{r_j+\mu \beta_{jk}}.
\]
The matrix of first moments is of the form $A^{jk}_{lm}$ 
\[
A^{jk}_{lm}=\frac{\lambda k (1-\mu) \beta_{jk} p_{lm}}{r_j}.
\]
\cite{gleeson2016effects} claims that for $\mu=0$ the branching number
\begin{equation}\label{eq: wrong}
\sum_{jk}\frac{j}{z}\frac{\beta_{jk}\lambda k p_{jk}}{r_j}=1
\end{equation}
implies the criticality of the system.\footnote{Note that this claim amounts to saying that extinction in finite time can be proved using the one-dimensional stochastic process
\[
\mu(t):=\sum_{j,k}\frac{j}{z} p_{jk} \mu^{jk}(t)
\]
with first moment equals to the left side of \eqref{eq: wrong} (cf. \cite[Equation (7)]{gleeson2016effects})}

However, the spectral radius of $A$ is given by its trace: For innovation probability $\mu=0$ we get
\begin{equation}\label{eq: superdyn1}
\rho(A)=\sum_{jk}\frac{\lambda k  \beta_{jk}p_{jk}}{r_j}=\sum \frac{z}{j}\frac{\beta_{jk} k p_{jk}}{\sum \beta_{jk} k p_{jk}}
\end{equation}
which may even exceed one. (In this case the process would be supercritical, an undesirable feature.) In fact, only for
deterministic in-degree equals $z$, the critically is obvious, but this case has been studied in Model 2 already.

Nevertheless, by modulating the arrival intensities $r_{jk}$ appropriately, one can correct the model \cite[Equation (7)]{gleeson2016effects})
so to make it critical ($\mu=0$) or subcritical $(\mu>0$). Keeping the functional form \eqref{eqnull}, we can modify
the definition of $\overline\beta$, setting
\[
\overline\beta=\sum_{j,k}\frac{k}{j}\beta_{jk} p_{jk}.
\]
Note the slight difference with the original definition of $\overline \beta$, where division was by mean in-degree instead of actual degree. This change amounts to modulating the meme arrival rate as
\[
r_j=\mu \beta_{jk}+j\left(\sum_{j,k}\frac{k}{j}\beta_{jk} p_{jk}\right)\lambda.
\]
By the first identity in \eqref{eq: superdyn1}, $\rho(A)=1$ and thus, by Theorem \ref{th: irr char disc timec}, extinction occurs with probability one. (Furthermore, similarly to Model 2,  irreducibility of $A$ and non-vanishing second moment $B$
implies criticality.)
\subsection{Quality of Approximation}\label{sec: 43}
The approach of Sections \ref{sec: 41} and \ref{sec: tuning} to represent the multi-type branching model by a single-type branching process does not work when $\varepsilon\neq 0$ in Model 2, because in this case the dynamics of the averaged PGF $\tilde F:=\sum_k p_k F^k$ does not become a uni-variate recursion. Nevertheless \cite{gleeson2014competition} studies in a simple variant of \cite{gleeson2016effects}) such a uni-variate process as an approximation. (See also Section S1 in the supplementary material, \url{https://arxiv.org/pdf/1305.4328}.)

We study in this section the question raised by \cite{gleeson2014competition} concerning the quality of the approximation. To this end, we
are developing a new representation of the model, where the timing of sharing is speeded up, while the popularity distribution (that is, the total number of particles produced by the time of extinction)
is exact (Model C below).

\begin{itemize}
\item [Model A] \label{a} (Original model of \cite{gleeson2014competition}) Consider a discrete-time, homogenous version
of Model 2  (where $\beta_k=\beta$, $\mu_k=\mu$ and $\lambda_k=1$  for all $1\leq k\leq n$) , defined by the PGFs in \eqref{eq: uncod Model2} \footnote{This is, therefore, essentially the model studied in \cite{gleeson2014competition}, without the option of dealing answering messages. We remove this scenario, so to be consistent with Model 2 and thus with \cite{gleeson2016effects}, but the same analysis for obtaining the tail behaviour of
the popularity distribution can be used also for the most general version of \cite{gleeson2014competition}.}. With these simplifications, the PGFs in  \eqref{eq: uncod Model2} assume the simple form
\begin{equation}\label{eq: many}
h^i(s)=\eta+\zeta s^i  (g(s))^i,\quad 1\leq i\leq n,
\end{equation}
where the PGF $g$ is defined in \eqref{eq: gp},
\[
\eta:=\frac{2\mu+z}{\mu+z+1},\quad \zeta:=1-\eta=\frac{(1-\mu)}{\mu+z+1} 
\]
and 
\[
z=g'(1)=\sum_k k p_k.
\]
\item  [Model B] \label{B} (Single-type approximation of \cite{gleeson2014competition})  The single-type branching process $\tilde \mu(t)$, approximating\footnote{Formally, one can get this approximation by replacing
\[
s^k\left(\sum_i p_i s^i\right)^k\approx \left(\sum_i p_i s_i\right)^{k+1}
\]
in which case \eqref{eq: many} simplifies to
\begin{equation}\label{eq: one}
h^i(s)=\eta+\zeta  (\sum_k p_k s^k)^{i+1},\quad 1\leq i\leq n.
\end{equation}
Let $\mu^i(t)$ be the particle decomposition at time $t$ of a branching process that starts with precisely one particle of type $i$
and produces off-spring according to PGF \eqref{eq: one}. Thus, the single-type process $\tilde\mu(t)$ defined by
\begin{equation}\label{eq: superx}
\tilde \mu(t):=\sum_k p_k \mu^k(t)
\end{equation}
starting with a single particle, represented by the random draw from $(p_k)$, has precisely the branching mechanism defined by \eqref{eq: Macsi}. \cite{gleeson2014competition} first considers the dynamics of the total number of particles produced
by time $t\geq 0$ (the so-called tree-size) by the process on the right hand side of \eqref{eq: superx}, then approximates the resulting differential equation of the tree-sizes so to make it autonomous equation in the tree-size of $\tilde \mu$, not the individual
trees coming from $\mu^i$.} Model A, is defined by the PGF
\begin{equation}\label{eq: Macsi}
\widetilde h(x)=\eta+\zeta xg(x),\quad 0\leq x\leq 1.
\end{equation}

\item  [Model C] \label{C} (New, exact single-type representation) We define the single-type discrete-time process $\mu^*(t)$ by its one-period PGF
\[
h^*(x):=\eta \sum\frac{p_i}{1-\zeta x^i)},\quad 0\leq x\leq 1.
\] 
\end{itemize}
We have the following:
\begin{theorem}
The total progeny of $\nu$ of Model A, starting with the initial distribution $\nu_0$ drawn from $(p_k)_k$, equals in law
to the total progeny of Model C, starting with a single particle.
\end{theorem}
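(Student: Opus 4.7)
The plan is to derive a single implicit functional equation that the PGFs of the total progeny of both models must satisfy, and then to invoke uniqueness of the analytic germ at $x=0$. The motivating picture is that in Model A the ``self-regeneration chain'' of a type-$i$ particle (produced by the factor $s^i$ in $h^i=\eta+\zeta s^i g(s)^i$) emits, over its geometrically distributed lifetime, a random batch of fresh random-type children; Model C is engineered so that a single step compresses this entire chain.

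On the Model A side, I would let $\phi^i(x):=\mathbb E[x^{T_i}]$ denote the PGF of the total progeny $T_i$ started from a single type-$i$ particle, and set $\psi(x):=\sum_i p_i\phi^i(x)$, the PGF under the $(p_k)$-mixed initial condition. Tracking the type-$i$ particle through its successive self-regenerations yields the distributional identity $T_i \stackrel{d}{=} 1+\sum_{k=1}^{iN_i} T_{J_k}$, where $N_i$ is geometric with $P(N_i=n)=\eta\zeta^n$, the $J_k$ are i.i.d.\ with law $(p_k)$, and the $T_{J_k}$ are independent copies of the total progeny started from a fresh type-$J_k$ particle. Taking PGFs and summing the resulting geometric series in $n$ produces $\phi^i(x)=x\eta/(1-\zeta\psi(x)^i)$, and averaging with the weights $p_i$ yields
\[
\psi(x)=x\eta\sum_i\frac{p_i}{1-\zeta\psi(x)^i}.
\]

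For Model C, the standard total-progeny identity $\phi^*(x)=x\,h^*(\phi^*(x))$ combined with the definition of $h^*$ gives directly
\[
\phi^*(x)=x\eta\sum_i\frac{p_i}{1-\zeta\phi^*(x)^i}.
\]
Both $\psi$ and $\phi^*$ are PGFs vanishing at $x=0$ and solve the common equation $F(x,y):=y-x\eta\sum_i p_i/(1-\zeta y^i)=0$. Since $\partial_y F(0,0)=1\neq 0$, the implicit function theorem yields a unique analytic branch $y=y(x)$ with $y(0)=0$ in a neighbourhood of the origin. Hence $\psi\equiv\phi^*$ as analytic functions, their power-series coefficients coincide, and the two total-progeny distributions are identical.

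The principal obstacle is the chain-aggregation step in the Model A calculation: one must justify rigorously that the type-$i$ ``persistence'' events (the $s^i$ factor in $h^i$), which under the paper's book-keeping convention do not themselves increment popularity, may be bundled with the memoryless geometric lifetime $N_i$ to yield, in distribution, a single batch of $iN_i$ fresh random-type children, with the residual subtrees remaining i.i.d.\ copies of the original process. Once this reduction is in place, the matching of the two functional equations is immediate and the uniqueness argument concludes the proof.
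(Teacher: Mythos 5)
Your proposal is correct, and it reaches the conclusion by a genuinely different route from the paper. The paper argues at the level of the one-period branching mechanism: it modifies the timing of sharing by repeatedly substituting $h^i$ into the $s^i$ slot, producing a sequence of offspring PGFs whose limit is $\eta/(1-\zeta g(s)^i)$, asserts that total progeny is invariant under each such timing change, and then collapses to a single type because the limiting mechanism depends on $s$ only through $g(s)=\sum_l p_l s^l$. You instead bypass the intermediate models entirely and work directly with the total-progeny PGFs: aggregating the geometric self-regeneration chain gives $\phi^i(x)=x\eta/(1-\zeta\psi(x)^i)$, averaging gives the fixed-point equation for $\psi$, the standard identity $\phi^*=x\,h^*(\phi^*)$ gives the identical equation for Model C, and uniqueness of the analytic germ at the origin (via the implicit function theorem, using $\partial_yF(0,0)=1$) closes the argument. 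Your route buys rigour at exactly the point where the paper is informal: the paper's claim that ``in law these processes converge'' to the limiting mechanism, and that the limit preserves total progeny, is never justified, whereas your uniqueness step does that work explicitly; the paper's route, in exchange, is constructive and explains where the formula for $h^*$ comes from. The ``principal obstacle'' you flag — bundling the persistence events with the geometric lifetime $N_i$ so that the fresh subtrees remain i.i.d.\ — is not a real gap: it is the same convention the paper adopts implicitly when it substitutes $h^i$ for $s^i$ without an extra counting factor (the re-shared copy on the same account does not increment popularity), and under that convention the decomposition $T_i\stackrel{d}{=}1+\sum_{k=1}^{iN_i}T_{J_k}$ follows from the branching property together with the memorylessness of the geometric chain. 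One could make your write-up fully airtight by noting that $|\psi(x)|\le 1$ and $\zeta<1$ guarantee convergence of the geometric series, and that both $\psi$ and $\phi^*$ are power series vanishing at $0$ with nonnegative coefficients summing to at most one, hence analytic on the unit disc, so the coefficient comparison is legitimate.
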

\begin{proof}
First, Model A's total population doesn't depend on timing of sharing in the following sense. Recall that in Model A, the following scenarios occur to a type $i$ particle in one step.
\begin{itemize}
\item Scenario 1 With probability $\eta$ it dies.
\item Scenario 2 With probability $\zeta=1-\eta$ the associated meme is shared: That is, the particle of Type $i$ is reproduced, so to live for one more period, and at the same time, $i$ random particles are produced, each of type $k$ with probability $k$.
\end{itemize}
The total progeny at the time of extinction is equivalent to a model $A1$, where we trace the future of the reproduced particle $i$ of Scenario 2, so to consider its descendants
being produced immediately (as opposed to being reproduced in the  subsequent period): A single meme at a user's account of Type $i$ thus produces
descendants in a single period as follows:
\begin{itemize}
\item Scenario 1 With probability $\eta$ it dies.
\item Scenario 2a With probability $\zeta\eta$ only $i$ descendants are produced, each of type $k$ with probability $k$.
\item Scenario 2b With probability $\zeta(1-\eta)$ it reproduces, thus another particle of Type $i$ lives for another period, and $2\cdot i$ random particles are produced, each of type $k$ with probability $k$.
\end{itemize}
This change of timing amounts to accelerating sharing, and can be expressed by partially iterating the PGF. By repeatedly doing so,  we obtain the following sequence of PGFs
\begin{align*}
&\eta+\zeta s^i  (\sum_l p_l s^l)^i,\\
&\eta+\zeta \left(\eta+\zeta s^i  (\sum_l p_l s^l)^i\right)  (\sum_l p_l s^l)^i,\\
&\quad\vdots\\
&\eta\sum_{k=0}^\infty \zeta^k  (\sum_l p_l s^l)^{ki}
\end{align*}
and thus, in law these processes converge to a branching processes with a one-period branching mechanism expressed by the PGF
\[
h^{i,*}(s):=\frac{\eta}{1-\zeta  (\sum_l p_l s^l)^i}.
\]
Defining the single variable $x:=\sum_l p_l s^l$, and the uni-variate PGF
\[
\tilde h^*(x):=\sum_i p_i h^{i,*}(s),
\]
we have a well-defined single-type branching process $\mu^*(t)$ , with branching mechanism is given by the PGF
\[
h^*(x)=\eta\sum_{i=1}^n \frac{p_i}{1-\zeta x^i}.
\]
\end{proof}

We are thus prepared to quantify the quality of the approximation of Model B, by comparison with Model C. To this end, we establish the asymptotic behaviour of the tails
of the popularity distribution in both models:\footnote{Note that the mathematical machinery we are using below is partly different to  \cite{gleeson2014competition}, as we do not model tree-sizes (population) directly, but
model the underlying branching process.}, using the following auxiliary statement (see \cite[Appendix S3, Lemma 1 and its proof]{gleeson2014competition}):
\begin{lemma}\label{lem: wilf}
Let $\Phi(x)=\sum_k \pi_k x^k$ be the PGF of the distribution $\pi_k$ and suppose $\Phi$ has the following
asymptotic series near $x=1$,
\begin{equation}\label{eq: tara}
\Phi(1-w)=\text{analytic part}\quad+\sum_{m=1}^\infty c_m w^{\beta_m},\quad w\rightarrow 0,
\end{equation}
where $w=1-x$ and $\beta_1<\beta_2<\dots$ are positive, non-integer powers. Then the leading order asymptotic behaviour of $\pi_k$ is
\[
\pi_k\sim \frac{c_1}{\Gamma(-\beta_1)}k^{-\beta_1-1},\quad k\rightarrow \infty,
\]
where $\Gamma$ denotes the Gamma function.
\end{lemma}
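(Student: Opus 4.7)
The statement is a standard singularity analysis (Flajolet--Odlyzko) transfer theorem, so the plan is to recover $\pi_k$ via Cauchy's coefficient integral and deform the contour to exploit the local expansion \eqref{eq: tara} at the dominant singularity $x=1$. Write
\[
\pi_k=\frac{1}{2\pi i}\oint_{|x|=r}\frac{\Phi(x)}{x^{k+1}}\,dx
\]
for some $r<1$, and split $\Phi=\Phi_{\text{an}}+\Phi_{\text{sing}}$ where $\Phi_{\text{an}}$ is the analytic part of \eqref{eq: tara} and $\Phi_{\text{sing}}(x)=\sum_{m\ge 1}c_m(1-x)^{\beta_m}$.

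The first step is to dispose of $\Phi_{\text{an}}$. Since it is analytic in a disk of radius strictly greater than one, its Taylor coefficients decay geometrically, so its contribution to $\pi_k$ is $O(\rho^{-k})$ for some $\rho>1$ and thus negligible compared to the target rate $k^{-\beta_1-1}$. The second step is to extract coefficients of the singular terms. Each term $(1-x)^{\beta_m}$ has, by the generalized binomial series, coefficients
\[
[x^k](1-x)^{\beta_m}=(-1)^k\binom{\beta_m}{k}=\frac{\Gamma(k-\beta_m)}{\Gamma(-\beta_m)\,\Gamma(k+1)},
\]
which, via Stirling's formula $\Gamma(k-\beta_m)/\Gamma(k+1)\sim k^{-\beta_m-1}$, yields the asymptotic $[x^k](1-x)^{\beta_m}\sim k^{-\beta_m-1}/\Gamma(-\beta_m)$. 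The hypothesis $\beta_m\notin\mathbb N$ guarantees $\Gamma(-\beta_m)$ is finite and nonzero. Since $\beta_1<\beta_2<\dots$, only the $m=1$ term contributes to leading order, producing $\pi_k\sim c_1 k^{-\beta_1-1}/\Gamma(-\beta_1)$.

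The step that needs genuine care is the transfer itself: one must justify that the local expansion \eqref{eq: tara} transfers from the singular behaviour of $\Phi$ at $x=1$ to the asymptotic behaviour of $[x^k]\Phi(x)$. The standard route is Flajolet--Odlyzko: deform the circle $|x|=r$ to a Hankel-type contour consisting of a small circle of radius $1/k$ around $x=1$ joined to two rays, all lying inside a ``camembert" region where $\Phi$ extends analytically (this analytic continuation is tacitly part of the hypothesis, being implicit in the phrase ``analytic part" in \eqref{eq: tara}). Along this contour the integral $\tfrac{1}{2\pi i}\int(1-x)^{\beta}x^{-k-1}dx$ reduces, after the change of variables $x=1+t/k$, to a Hankel integral representation of $1/\Gamma(-\beta)$, yielding the claimed asymptotic for the leading singular term, while the remainder $\sum_{m\ge 2}c_m(1-x)^{\beta_m}$ contributes $O(k^{-\beta_2-1})=o(k^{-\beta_1-1})$.

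The main obstacle is therefore the analytic continuation requirement and the uniform estimates along the Hankel contour; once this technical hurdle is cleared, the rest is routine application of Stirling's formula. Since the lemma is quoted from \cite{gleeson2014competition}, one may alternatively simply appeal to their Appendix~S3 for a self-contained proof that already carries out precisely the above contour deformation in the form needed here.
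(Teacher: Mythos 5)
The paper does not supply its own proof of this lemma: it is imported verbatim, with proof, from \cite[Appendix S3, Lemma 1]{gleeson2014competition}, exactly as your last sentence anticipates. Your blind argument is the standard Flajolet--Odlyzko transfer theorem and is essentially the same argument carried out in that appendix: Cauchy's coefficient formula, splitting off the analytic part (exponentially small contribution), the exact identity $[x^k](1-x)^{\beta}=\Gamma(k-\beta)/(\Gamma(-\beta)\Gamma(k+1))\sim k^{-\beta-1}/\Gamma(-\beta)$ for non-integer $\beta$, and a Hankel-contour deformation to justify the transfer and to bound the $m\ge 2$ remainder by $O(k^{-\beta_2-1})$. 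You are also right to flag the one genuine hypothesis the statement leaves implicit, namely analytic continuation of $\Phi$ to a $\Delta$-domain with $x=1$ as the unique dominant singularity (and, strictly, a uniform $O(|1-x|^{\beta_2})$ bound on the tail $\sum_{m\ge 2}c_m(1-x)^{\beta_m}$ in that domain); these are needed for the contour deformation but are standard in this context and are equally tacit in the cited source. In short: your proof is correct modulo those tacit regularity assumptions, and it coincides with the proof the paper points to rather than offering a different route.
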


The PGF $\varphi(u)=\sum_{m=1}^\infty q_m u^m$ of the popularity distribution $q_m$, $m\geq $ of a single-type branching process in discrete time with branching mechanism $F(u)$, $0\leq u\leq 1$ satisfies
by \cite[Chapter V.5]{Sewastjanov}
\begin{equation}\label{eq: x}
\varphi(u)=u\cdot F(\varphi(u)).
\end{equation}
To get an expression of the form \eqref{eq: tara} for Model 2, we first approximate $F$ by a Taylor approximation of second order (using $F(1)=F'(1)=1$ and $F''(1)>0$, as satisfied by Model 2 and Model 3), and then substitute $\varphi(u)=1-\theta(w)$, where $w=1-u$, such that \eqref{eq: x} reads
\[
1-\theta\simeq (1-w)\left(1-\theta+\frac{B}{2}\theta^2\right).
\]
Hence, near $w=0$,
\[
\varphi(1-w)\simeq 1-\theta(w)\simeq -\sqrt{\frac{2}{F''(1)}}w^{1/2}+\text{analytic part}.
\]
Thus by Lemma \ref{lem: wilf}, noting that $\Gamma(-1/2)=-2\sqrt\pi$,
\begin{equation}\label{eq: atail}
q_m\sim \frac{m^{-3/2}}{\sqrt{2\pi F''(1)}}\quad \text{as}\quad m\rightarrow \infty.
\end{equation}
This power tail behaviour of order $-3/2$ agrees with the findings of  \cite{gleeson2014competition} in the same case (zero innovation, finite second moment of the degree distribution $p_k$), and therefore, one could be led to conclude that the approximation of Model A by Model C is indeed excellent. However, as shown above, this asymptotics is a feature of all critical single-type branching processes (that is, $F(1)=F'(1)=1$ with $F''(1)>0$). But not every uni-variate approximation of Model A with $\mu=0$ that is critical itself, is a good approximation in this sense. To give a meaningful measure of the quality of approximation, the precise rate in \eqref{eq: atail} can be compared, which depends on the constant $1/\sqrt{2\pi F''(1)}$, and thus involves the second moment $F''(1)$. We have in Model B
\[
F''(1)=\tilde h''(1)=\frac{g''(1)}{z+1}+\frac{z}{z+1},
\]
while the exact second moment in the exact representation C is
\[
F''(1)=(h^*)''(1)=\frac{1}{z}g''(1)+\frac{2}{z}.
\] 
\cite{gleeson2014competition} finds through simulation that when $p_k\sim k^{-\gamma}$ ($\gamma=2.5$, $k\geq 4$), the approximation Model B is satisfactory. Indeed, in this case
the mean degree $z\approx 10.4$, and the second moment of the out-degree distribution $g''(1)\approx 1941.3$, hence
\[
(h^*)''(1)-\tilde h''(1)\approx186.8-171.1=15.7 
\]
which amounts to an error of $8\%$ in the second moment, and thus the tails are of the popularity distribution are similar, but not as fat as those suggested by Model B.

As Model C provides an exact representation of the popularity of Model A, one can use standard results for one-dimensional branching processes to compute the exact form of the popularity distribution, at least numerically: By \cite[Satz V.5.4]{Sewastjanov},
\[
q_m=\frac{1}{m}\mathbb P[\xi_1+\dots+\xi_m=m-1],
\] 
where $\xi_i$ are independent, identically distributed random variables with PGF $F(u)$. In other words, $q_m$ is the $(m-1)$--ths coefficient
in the series representation of $F^m$, divided by $m$. Obviously, this is much easier to deal with than the aforementioned multi-variate version \cite{good1960generalizations}. Not surprisingly, Figure \ref{figure3} confirms that the asymptotic formula fore $q_m$ is better, the larger $m$ is, converging slowly to the true popularity distribution. Furthermore, it demonstrates that Model B is an excellent approximation of Model A for the specific parameter choice, as not only the asymptotics, but also the true popularity distributions agree well. 
\begin{figure}
\includegraphics{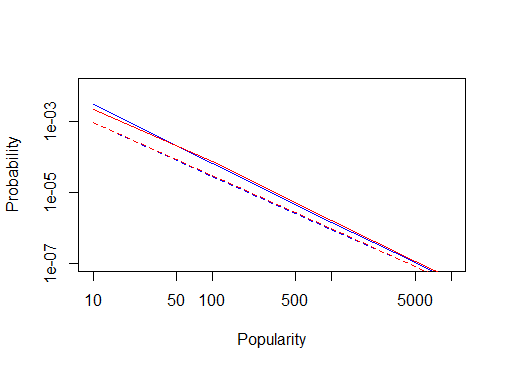}
\caption{The probaiblities $q_m$ in Model C (solid blue line) vs Model B (solid red line), where we use $p_k\sim k^{-\gamma}$, for $\gamma=2.5$, $k\geq 4$. The approximation \eqref{eq: atail} is in dashed lines. For these parameter choices, there is an excellent agreement between the uni-variate proxy of Model A by Model B, and the true uni-variate representation of Model A by Model C.}
\label{figure3}
\end{figure}

\bibliographystyle{plainnat}

\end{document}